\setlist[itemize]{leftmargin=*,topsep=2pt}
\setlist[enumerate]{leftmargin=*,topsep=2pt}
\numberwithin{equation}{section}
\newtheorem{theorem}{Theorem}[section]
\newtheorem{lemma}[theorem]{Lemma}
\newtheorem{proposition}[theorem]{Proposition}
\newtheorem{corollary}[theorem]{Corollary}
\newtheorem{assumption}[theorem]{Assumption}
\title{PAC global optimization for VQE in low-curvature geometric regimes}
\author{Benjamin Asch}
\affiliation{ETH Z\"urich}
\email{benjamintasch@berkeley.edu}
\newcommand{\R}{\mathbb{R}}
\newcommand{\T}{\mathbb{T}}
\newcommand{\Tp}{\T^{p}}
\newcommand{\B}{\mathbb{B}}
\newcommand{\Prb}{\mathbb{P}}
\DeclareMathOperator{\poly}{poly}
\newcommand{\Cost}{C}
\newcommand{\Z}{\mathbb{Z}}
\newcommand{\rad}{\mathrm{rad}}
\newcommand{\Ball}[2]{\B\!\left(#1,#2\right)}
\newcommand{\wh}[1]{\widehat{#1}}
\newcommand{\dist}{\mathrm{dist}}
\DeclareMathOperator{\Gr}{\mathrm{Gr}}
\newcommand{\EE}{\mathbb{E}}
\newcommand{\HH}{\mathcal{H}}     
\newcommand{\Unif}{\mathrm{Unif}}
\theoremstyle{definition}
\newtheorem{definition}{Definition}[section]
\theoremstyle{remark}
\newtheorem{remark}{Remark}[section]
\theoremstyle{definition}
\newtheorem{example}{Example}[section]
\crefname{theorem}{Theorem}{Theorems}
\crefname{lemma}{Lemma}{Lemmas}
\crefname{algorithm}{Algorithm}{Algorithms}
\crefname{assumption}{Assumption}{Assumptions}
\crefname{definition}{Definition}{Definitions}
\crefname{section}{Section}{Sections}
\crefname{remark}{Remark}{Remarks}
\begin{document}
\maketitle

\begin{abstract}
We give noise-robust, Probably Approximately Correct (PAC) guarantees of
global $\varepsilon$-optimality for the Variational Quantum Eigensolver under explicit geometric conditions. For periodic ansatzes
with bounded generators—yielding a globally Lipschitz cost landscape on a toroidal
parameter space—we assume that the low-energy region containing the global
minimum is a Morse--Bott submanifold whose normal Hessian has rank
$r = O(\log p)$ for $p$ parameters, and which satisfies polynomial fiber regularity with respect to coordinate-aligned, embedded flats. This low-curvature-dimensional structure serves as a model for regimes in which only a small number of directions control energy variation, and is consistent with mechanisms such as strong parameter tying together with locality in specific multiscale and tied shallow architectures.

Under this assumption, the sample complexity required to find an
$\varepsilon$-optimal region with confidence $1-\delta$ scales with the curvature dimension $r$
rather than the ambient dimension $p$.  With probability at least $1-\delta$, the
algorithm outputs a region in which all points are $\varepsilon$-optimal, and at least one lies within a bounded neighborhood of the global minimum. The resulting complexity is quasi-polynomial in $p$ and
$\varepsilon^{-1}$ and logarithmic in $\delta^{-1}$. This identifies a
geometric regime in which high-probability global optimization remains
feasible despite shot noise.
\end{abstract}

\noindent\textsf{Keywords:} Variational quantum eigensolver, Lipschitz optimization, randomized sampling, PAC guarantees, quantum optimization

\section{Introduction}
\label{sec:intro}

Variational quantum algorithms (VQAs)~\cite{Cerezo_2021_VQAs} remain among the promising near-term approaches for leveraging noisy intermediate-scale
quantum (NISQ) devices~\cite{Bharti_2022, Preskill_2018, farhi2014quantumapproximateoptimizationalgorithm}. They operate by
parameterizing a family of quantum circuits and optimizing an expected energy or
cost functional. Among the most prominent examples, the Variational Quantum
Eigensolver (VQE)~\cite{Peruzzo_2014, Tilly_2022} has achieved empirical success
across quantum chemistry and condensed-matter
applications~\cite{Kandala_2017, Hempel_2018}.
Despite this progress, the theoretical understanding of variational optimization
remains incomplete. The induced landscapes are generally nonconvex and
high-dimensional, often shaped by symmetries, degeneracies, and noise that
complicate the geometry of global minima~\cite{anschuetz2023criticalpointsquantumgenerative, Anschuetz_2022}.

Several analyses capture restricted regimes. When the ansatz map is locally surjective on the relevant manifold, gradient flow can converge to the global minimum given infinite precision and time~\cite{wiedmann2025convergencevariationalquantumeigensolver}, though this condition is rarely met by hardware-efficient circuits. Conversely, in sufficiently expressive families, gradients can vanish exponentially with system size, producing the barren-plateau phenomenon that
renders local descent ineffective~\cite{McClean_2018, Cerezo_2021_BPs}. 
More recently, it has been shown that shallow alternating-layer circuits with
guiding states and small initialization admit a linearized training
interpretation, enabling convergence and generalization guarantees across input
families~\cite{nguyen2025theoreticalguaranteesvariationalquantum}, albeit under a
warm-start requirement. 
Global or gradient-free optimizers---Bayesian, evolutionary, or
heuristic~\cite{ShahriariSWAF16, hansen2023cmaevolutionstrategytutorial}---may
perform well empirically, but typically lack rigorous non-asymptotic guarantees
and scale poorly with the number of parameters. 
Classical Lipschitz-based global optimization
methods~\cite{malherbe2017globaloptimizationlipschitzfunctions} produce correct solutions but rely on uniform partitions of
the ambient space, yielding exponential dependence on the parameter dimension.
None of these approaches explain why the strong low-dimensional structure
empirically observed in VQE landscapes can dramatically reduce the effective
difficulty of the optimization.

Our starting point is the geometry of the landscape itself. 
For a periodic ansatz with bounded generators and a bounded Hamiltonian $H$, the
resulting cost is globally Lipschitz and has uniformly bounded curvature.
These regularities arise directly from physically motivated architectures.
Rather than viewing VQE optimization as stochastic descent over a rugged
surface, we treat it as probabilistic elimination carried out over random
geodesic flats of a smooth toroidal manifold.

Low-energy regions in VQE landscapes have been observed to concentrate on
lower-dimensional manifolds or ``dents''~\cite{gard_2020, akande_symmetries, powerQNNs_Abbas_2021}. 
We later describe how strong parameter tying together with architecturally induced locality provides one mechanism that
plausibly yields reduced normal Hessian rank, and we highlight other architectures with similar structure. 
Motivated by this geometry, we introduce \emph{Adaptive Lipschitz Elimination on
the Torus} (\textsc{ALeT}), a modification of classical Lipschitz-elimination
methods~\cite{malherbe2017globaloptimizationlipschitzfunctions} designed to run
in quasipolynomial PAC time under this landscape structure. 
\textsc{ALeT} removes provably suboptimal regions using conservative Lipschitz
certificates and finite-shot confidence intervals, and is efficient when the
normal Hessian rank satisfies $r = O(\log p)$. 
With probability at least $1-\delta$, the returned region is $\varepsilon$-optimal:
every point it contains obeys $C(\theta)\le C^*+\varepsilon$. 
We emphasize that the guarantees established here are conditional on explicit
geometric regularity assumptions (See Section~\ref{sec:prelim}); we do not claim tractability of VQE
optimization in general. 
Furthermore, the solution is $\varepsilon$-optimal for the variational landscape,
not necessarily for the exact ground-state energy of the Hamiltonian.

This framework unifies geometry and probability into a common perspective on
variational optimization. Efficiency arises not from surjectivity or
near-convexity, but from structural properties---periodicity, Lipschitz
continuity, low-dimensional curvature, and fiber regularity---that are often naturally present in
physically motivated settings. 
Section~\ref{sec:prelim} formalizes the setup and assumptions. 
Section~\ref{sec:hessian-rank} identifies architectural regimes admitting low
normal Hessian rank. 
Section~\ref{sec:alet} introduces \textsc{ALeT} and establishes its PAC
guarantees. 
Section~\ref{sec:discussion-outlook} discusses limitations, implications, and
directions for future work.

\paragraph{Contributions.}
This work provides a conditional tractability result for global optimization of VQE landscapes. 
Our main contributions are:

\begin{itemize}
    \item \textsf{Structural regime.}
    We formalize a geometric regime---a Morse--Bott low-energy manifold with low normal curvature dimension and polynomial fiber-regularity---under which approximate global optimization becomes feasible with high probability even in the presence of sampling noise.

    \item \textsf{Noise-robust global optimization.}
    We introduce \textsc{ALeT}, a Lipschitz-based elimination method operating on random $(r{+}1)$-dimensional flats, and prove PAC-style guarantees showing it returns an $\varepsilon$-optimal parameter region using only noisy cost evaluations.

    \item \textsf{Quasi-polynomial complexity.}
    Under the above structural assumptions, \textsc{ALeT} attains sample complexity 
    $\mathrm{poly}(p,\varepsilon^{-1})^{O(r)}$, which becomes quasi-polynomial whenever the normal curvature rank satisfies $r = O(\log p)$.

    \item \textsf{Geometric interpretation.}
    Our analysis indicates that tractability is governed not by global smoothness or convexity surrogates, but by the \emph{effective geometric dimension} of curvature in the low-energy region, providing a precise analogue of structural landscape assumptions in classical nonconvex optimization.
\end{itemize}


\section{Setup}
\label{sec:prelim}

We work with standard product ansatzes on a toroidal parameter space. This
section fixes notation and records the conditional geometric and noise assumptions under
which the main results hold.

\subsection{Parameter space and cost landscape}

The variational circuit is
\[
  U(\theta)\;=\;\prod_{j=1}^{p} e^{-i \theta_j A_j}\,,
  \qquad \theta \in \T^p := (\R/\gamma\Z)^p,
\]
acting on a fixed reference state $\ket{0}$. The cost function is
\[
  \Cost(\theta)\;=\;\langle 0|\, U(\theta)^\dagger H U(\theta)\, |0\rangle,
\]
for a Hamiltonian $H$. We equip $\T^p$ with the geodesic distance
\[
  \dist_{\T^p}(\theta,\theta')
  \;:=\;
  \min_{k\in\Z^p}\,\|\theta-\theta'+\gamma k\|_2,
\]
where $\gamma := 2\pi q$ is the periodicity of the torus. We implicitly identify $\T^p$ with its fundamental domain $[-q\pi,q\pi)^p$
whenever convenient. Closed metric balls are denoted
$\Ball{x}{\mathfrak r}$, and an $\mathfrak r$-net of $S\subseteq\T^p$ is a
finite set of centers whose $\mathfrak r$-balls cover $S$. Hausdorff
$\HH^k$ is the $k$-dimensional volume on $k$-rectifiable sets.

\subsection{Periodic generators and Lipschitz regularity}

The toroidal structure is induced by periodic dependence of $U(\theta)$ on each
parameter. The only structural input is periodicity and a uniform bound on $H$.

\begin{assumption}[Periodic bounded generators]\label{ass:bounded}
Each gate depends on its parameter only through $e^{-i\theta_j A_j}$, with
$A_j$ having rational spectrum $\lambda_k\in\frac{1}{q}\Z$ for some $q\in\mathbb{N}$.
Then $\theta\mapsto U(\theta)$ is $\gamma$-periodic in each coordinate with
$\gamma=2\pi q$. Throughout, we will (\emph{w.l.o.g.}) take $q=1$, and we regard $\theta$ modulo $\gamma$ as living on $\T^p$.
\end{assumption}

\begin{assumption}[Bounded Hamiltonian]\label{ass:H}
The Hamiltonian satisfies $\|H\|_2 \le \Lambda$ for some known $\Lambda>0$.
\end{assumption}

\noindent Under these conditions the cost is globally Lipschitz on $(\T^p,\dist_{\T^p})$.

\begin{proposition}[Global Lipschitz continuity]\label{prop:Lipschitz}
Under Assumptions~\ref{ass:bounded} and~\ref{ass:H},
\[
  |\Cost(\theta)-\Cost(\theta')|
  \;\le\; 2\Lambda\,\sqrt{\sum_j \|A_j\|^2}\,\|\theta-\theta'\|_2
  \qquad \text{for all }\theta,\theta'\in\T^p.
\]
In particular, $\Cost$ is $L$-Lipschitz with
$L\le 2\Lambda\sqrt{\sum_j \|A_j\|^2}$.
\end{proposition}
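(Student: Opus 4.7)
The plan is to reduce the Lipschitz estimate on $\Cost$ to an operator–norm estimate on $U(\theta)$ and then to a per–coordinate gate estimate. First, writing $\Cost(\theta)-\Cost(\theta')=\bra{0}U(\theta)^\dagger H U(\theta)\ket{0}-\bra{0}U(\theta')^\dagger H U(\theta')\ket{0}$ and adding and subtracting $\bra{0}U(\theta)^\dagger H U(\theta')\ket{0}$ yields, via Cauchy--Schwarz together with $\|H\|\le\Lambda$ from Assumption~\ref{ass:H} and unitarity of $U(\theta),U(\theta')$, the bound
\[
  |\Cost(\theta)-\Cost(\theta')|\;\le\;2\Lambda\,\bigl\|U(\theta)-U(\theta')\bigr\|_{\op}.
\]
The task is therefore to bound $\|U(\theta)-U(\theta')\|_\op$ linearly in $\|\theta-\theta'\|_2$.

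Next I would insert a standard telescoping sum over the $p$ gates,
\[
  U(\theta)-U(\theta')\;=\;\sum_{j=1}^{p}\Bigl(\prod_{k<j}e^{-i\theta_k A_k}\Bigr)\bigl(e^{-i\theta_j A_j}-e^{-i\theta'_j A_j}\bigr)\Bigl(\prod_{k>j}e^{-i\theta'_k A_k}\Bigr).
\]
Since the prefix and suffix products are unitary, the triangle inequality in the operator norm gives $\|U(\theta)-U(\theta')\|_\op\le\sum_j\|e^{-i\theta_j A_j}-e^{-i\theta'_j A_j}\|_\op$. For each factor, Hermiticity of $A_j$ combined with the elementary integral identity $e^{-isA_j}-I=-iA_j\int_0^s e^{-itA_j}\,dt$ yields $\|e^{-i\theta_j A_j}-e^{-i\theta'_j A_j}\|_\op\le\|A_j\|\cdot|\theta_j-\theta'_j|$ in a fixed fundamental domain. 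Summing and invoking Cauchy--Schwarz in $\R^p$,
\[
  \sum_{j=1}^p\|A_j\|\,|\theta_j-\theta'_j|\;\le\;\Bigl(\sum_j\|A_j\|^2\Bigr)^{1/2}\|\theta-\theta'\|_2,
\]
which combined with the first display gives the claimed constant $2\Lambda\sqrt{\sum_j\|A_j\|^2}$.

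To promote the inequality from a Euclidean estimate on a fundamental domain to the geodesic distance $\dist_{\T^p}$, I would use Assumption~\ref{ass:bounded}: $\theta\mapsto\Cost(\theta)$ is $\gamma$-periodic in each coordinate, so $\Cost(\theta')=\Cost(\theta'+\gamma k)$ for every $k\in\Z^p$. Applying the Euclidean bound to the minimizing representative $\theta'+\gamma k^*$ realizing $\dist_{\T^p}(\theta,\theta')=\|\theta-\theta'-\gamma k^*\|_2$ then yields the Lipschitz bound in geodesic distance. No step here is a genuine obstacle; the only points requiring a little care are verifying that the telescoping prefixes/suffixes are unitary (which follows because each $A_j$ is Hermitian) and the reduction to the minimizing lattice representative.
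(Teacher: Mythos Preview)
Your argument is correct and yields exactly the stated constant, but it proceeds along a genuinely different route from the paper. The paper differentiates through the product ansatz to obtain $\partial_{\theta_j}C(\theta)=i\langle 0|[U^\dagger HU,\widetilde A_j]|0\rangle$ with $\widetilde A_j=U_{>j}^\dagger A_jU_{>j}$, bounds each partial by the commutator inequality $\|[X,Y]\|\le 2\|X\|\|Y\|$ to get $|\partial_{\theta_j}C|\le 2\Lambda\|A_j\|$, and then applies the mean value theorem along a geodesic. You instead work at the level of finite differences: reduce $|C(\theta)-C(\theta')|$ to $2\Lambda\|U(\theta)-U(\theta')\|_{\op}$, telescope over gates, bound each single-gate difference by $\|A_j\|\,|\theta_j-\theta'_j|$, and finish with Cauchy--Schwarz in $\R^p$. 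Your route is slightly more elementary (no differentiation on the torus, no commutator estimate) and makes the role of periodicity in passing to $\dist_{\T^p}$ explicit. The paper's route has the compensating advantage that it produces the gradient formula and the conjugated generators $\widetilde A_j$ as byproducts, which are reused in the Hessian analysis of Appendix~\ref{app:linearansatz} and Section~\ref{sec:hessian-rank}.
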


\noindent The proof is a standard commutator bound for product ansatzes and is included
in App.~\ref{app:L-H-bounds} for completeness. There we also introduce the
\emph{conjugated generators}
\[
  U_{>j}(\theta)\;:=\;\prod_{k=j+1}^p e^{-i\theta_k A_k},
  \qquad
  \widetilde A_j(\theta)\;:=\;U_{>j}^\dagger(\theta)A_j U_{>j}(\theta),
\]
which control both gradients and Hessians in later sections.

\subsection{Shot noise model}

Algorithmically we never observe $C(\theta)$ exactly but only via finite-shot
estimates. We adopt a minimal, homoscedastic model; variance-adaptive
generalizations would only change constants.

\begin{assumption}[Finite-shot measurements]\label{ass:shots}
Each evaluation at $\theta$ returns an unbiased estimator $\wh{\Cost}(\theta)$
computed from $n_{\mathrm{shots}}$ independent quantum measurements, with
outcomes bounded in $[-R/2,R/2]$ for a known $R>0$.
\end{assumption}

\noindent By Hoeffding’s inequality we immediately obtain per-point confidence
intervals.

\begin{lemma}[Energy concentration]\label{lem:energy-ci}
Under Assumption~\ref{ass:shots}, for any fixed $\theta$ and
$\alpha\in(0,1)$,
\[
  \Prb\!\left(|\wh{\Cost}(\theta)-\Cost(\theta)|
    \le \rad(n_\mathrm{shots},\alpha)\right)\;\ge\;1-\alpha, 
\]
where
\[
  \rad(n_\mathrm{shots},\alpha)
  := R\sqrt{\frac{\ln(2/\alpha)}{2n_\mathrm{shots}}}.
\]
\end{lemma}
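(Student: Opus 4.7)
The plan is a direct application of the two-sided Hoeffding inequality, since Assumption~\ref{ass:shots} already gives exactly the hypotheses needed: independence, unbiasedness, and a uniform bounded-support hypothesis on the per-shot outcomes.

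First, I would write the estimator explicitly as an empirical mean. By Assumption~\ref{ass:shots},
\[
  \wh{\Cost}(\theta) \;=\; \frac{1}{n_{\mathrm{shots}}}\sum_{i=1}^{n_{\mathrm{shots}}} X_i,
\]
where the $X_i$ are independent, each supported in $[-R/2,R/2]$, and $\EE[X_i]=\Cost(\theta)$ by unbiasedness. In particular, each $X_i$ lies in an interval of length $R$.

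Next, I would invoke Hoeffding's inequality in its standard two-sided form for bounded i.i.d.\ summands: for any $t>0$,
\[
  \Prb\!\left(\,\bigl|\wh{\Cost}(\theta)-\Cost(\theta)\bigr|\ge t\right)
  \;\le\; 2\exp\!\left(-\frac{2 n_{\mathrm{shots}} t^2}{R^2}\right).
\]
Finally, I would set the right-hand side equal to $\alpha$ and solve for $t$, yielding $t=R\sqrt{\ln(2/\alpha)/(2n_{\mathrm{shots}})}=\rad(n_{\mathrm{shots}},\alpha)$; complementation gives the claimed lower bound $1-\alpha$ on the confidence event.

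There is essentially no obstacle: the statement is a one-line textbook consequence of Hoeffding once independence and boundedness are granted, both of which are postulated in Assumption~\ref{ass:shots}. The only modeling point worth flagging, if the proof were to be made fully self-contained, is that the ``bounded in $[-R/2,R/2]$'' hypothesis is applied at the level of the raw per-shot random variables and not to the centered variables, so the interval width $R$ (rather than $2R$) is the correct quantity to substitute into Hoeffding; this is what produces the factor $2n_{\mathrm{shots}}/R^2$ in the exponent.
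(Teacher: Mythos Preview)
Your proposal is correct and matches the paper's approach: the lemma is stated as an immediate consequence of Hoeffding's inequality, and your write-up is precisely that computation with the correct identification of the interval width $R$.
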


\noindent These radii are fed directly into the elimination certificates in
Section~\ref{sec:alet}.

\subsection{Dent manifold and Morse--Bott structure}

The PAC guarantees are localized to a geometrically regular low-energy region
containing the global minimum. We model this region as a Morse--Bott critical
submanifold.

\begin{definition}[Dent manifold]\label{def:dent}
A subset $M\subset\T^p$ is a dent manifold for $C$ if,
\begin{enumerate}
  \item \emph{Global minimum containment:} there exists $\theta^\star\in M$
    such that $C(\theta^\star)\le C(\theta)$ for all $\theta\in\T^p$.
  \item \emph{Criticality on $M$:} $\nabla C(\theta)=0$ for all $\theta\in M$.
  \item \emph{Constant normal rank:} there exists $r\ge 1$ s.t.
    $\mathrm{rank}\bigl(\nabla^2 C(\theta)\bigr)
      =\mathrm{rank}\bigl(\nabla^2 C(\theta)\!\mid_{N_\theta M}\bigr)
      = r$
    for all $\theta\in M$.
  \item \emph{Positive curvature in normal directions:}
    $\nabla^2C(\theta)\!\mid_{N_\theta M}\succ 0$ for all $\theta\in M$.
\end{enumerate}
Here $N_\theta M$ is the normal space to $M$ at $\theta$.
\end{definition}

Intuitively, $M$ is a smooth, low-dimensional ``valley floor'' on which the
cost is flat. This is the precise version of the ``dent'' regime motivated in the introduction. Parameter symmetries, layer tying, and gauge freedoms all
naturally generate such Morse--Bott regions. Approximate Morse--Bott would suffice; we state the exact version for clarity and ease of analysis.

\subsection{Fiber regularity of the dent}

The randomized slicing analysis in Section~\ref{sec:alet} requires that $M$ not
concentrate an exponential fraction of its volume on a vanishing set of
fibers. We encode this by a polynomial fiber-regularity condition.

Fix a coordinate-aligned, linear $(r{+}1)$-plane $V\subset\R^p$ with orthogonal
complement $V^\perp$ of dimension $m{-}1=p{-}r{-}1$, and let $\T_{V^\perp}$ denote the subtorus corresponding to $V^\perp$. For $b\in\T_{V^\perp}$,
write
\[
  X(b) \;:=\; \HH^1\!\left(M\cap E(b,V)\right),
\]
where $\HH^1$ denotes $1$-dimensional Hausdorff measure (so $X(b)=0$ whenever
the intersection is empty or $0$-dimensional), and $E(b, V) \subset \Tp$ is the geodesic flat through $b$ parallel to $V$. Let $\mu := \EE_b[X(b)]$ denote
the expectation with respect to normalized Haar measure on $\T_{V^\perp}$.

\begin{assumption}[Polynomial fiber regularity]\label{ass:fiber-regularity}
We say $M$ has polynomial fiber regularity (with respect to $V$) if there exists
a constant $A_{p,m}\ge 1$, depending at most polynomially on $(p,m)$, such that
\[
   X(b) \;\le\; A_{p,m}\,\mu
   \qquad\text{for Haar-almost every } b\in\T_{V^\perp}.
\]
\end{assumption}
\noindent Therefore no typical slice exceeds the mean slice length by more than a
polynomial factor in $(p,m)$.

\begin{example}[Tubular dent intersection implies fiber regularity]
\label{ex:tube-implies-fiber}
Suppose that, for $(b,t)\in \T_{V^\perp}\times\R$, the set
$M\cap E(b, V)$ is contained in a tube
\[
   M \cap E(b, V)\;\subseteq\; \{(b,t): b\in B,\; -\rho(b)\le t \le \rho(b)\},
\]
where $B\subset\T_{V^\perp}$ is measurable and
\[
  0 < \rho_{\min}\le \rho(b)\le \rho_{\max}\qquad(b\in B),
  \qquad
  \frac{\rho_{\max}}{\rho_{\min}} \le \poly(p,m),
\]
and where the base has non-negligible measure,
\[
   \frac{\HH^{m-1}(B)}{(2\pi)^{m-1}} \;\ge\; \frac{1}{\poly(p,m)}.
\]
Then $M$ satisfies Assumption~\ref{ass:fiber-regularity} with
$A_{p,m} = \poly(p,m)$.
\end{example}

\begin{proof}
For $b\in B$, the slice $M\cap E(b,V)$ coincides with an
interval of length $X(b)$, and for $b\notin B$ we have $X(b)=0$. Thus
\[
  \mu
  = \EE_b X(b)
  = \frac{1}{(2\pi)^{m-1}}\int_{\T_{V^\perp}}X(b)\,db
  \geq \frac{2\rho_\mathrm{min}}{\poly(p,m)}.
\]
For $b\in B$,
\[
  X(b)\le 2\rho_{\max}
  \le \frac{\rho_{\max}}{\rho_{\min}}
       \frac{(2\pi)^{m-1}}{\HH^{m-1}(B)}\,\mu
  = A_{p,m}\,\mu,
\]
and for $b\notin B$ the inequality is trivial since $X(b)=0$. Hence
Assumption~\ref{ass:fiber-regularity} holds with the stated $A_{p,m}$.
\end{proof}

\begin{remark}\label{rmk:fiber-regularity}
Assumption~\ref{ass:fiber-regularity} is a structural regularity condition placed on the dent manifold $M$. It rules out the pathological situation where $M$ concentrates an exponential fraction of its volume on a vanishing set of fibers of the projection $\pi_{V^\perp}$. The assumption is not derived from the curvature or positive-reach hypotheses; it is an additional geometric requirement that determines whether randomized slicing can be used to certify non-emptiness with polynomial overhead.

In highly structured VQE landscapes, empirical studies of dent-like minima and symmetry-reduced low-energy regions suggest that the mass of near-optimal sets is typically spread over many fibers rather than collapsing onto a negligible subset. Assumption~\ref{ass:fiber-regularity} should be viewed as a conditional geometric hypothesis, such that whenever a VQE landscape exhibits a Morse--Bott minimum whose low-energy region is not exponentially concentrated along any coordinate fiber, the guarantees of Section 4 go through.

Establishing verifiable sufficient conditions for fiber regularity--possibly using curvature measures~\cite{federer1959curvature, RatajZaehle2019} or projection identities of Federer-Kubota type~\cite{lotz2019concentrationintrinsicvolumesconvex}--is an interesting direction but not required for our results. Though harder to motivate, a simple and weaker second-moment bound also suffices for the Paley-Zygmund step of Section~\ref{sec:alet}.
\end{remark}


\section{Low Hessian Rank}
\label{sec:hessian-rank}

The efficiency of \textsc{ALeT} depends on curvature concentrating in a small number
of normal directions. Crucially, the relevant Hessian is the \emph{normal
Hessian along the dent manifold} $M$, not the ambient Hessian at generic
parameter values. 

In this section we describe architectural conditions under
which the normal curvature rank satisfies $r = O(\log p)$. Parameter tying alone does not guarantee low curvature rank unless such tying is so strong so as to induce a rank $O(\log p)$ tying map (chain rule); rather,
tying must interact with locality and multiscale structure to collapse the
span of \emph{backpropagated generator templates} that control the curvature
near $M$. The following discussion and architectural examples serve as motivation, illustrating mechanisms that plausibly yield 
low effective curvature rank, but we do not prove low-rank bounds from architectural assumptions in the current work.

\subsection{Tying, locality, and the effective curvature dimension}

Recall from App.~\ref{app:linearansatz} that the Hessian in a direction $v \in \mathbb{R}^p$ can be written as
\[
  \partial^2_v C(\theta)
  = -\langle \lbrack K_\theta(v), [H, K_\theta(v)] \rbrack\rangle + \langle i[H, \dot K]\rangle ,
\]
where $K_\theta(v) = \sum_j v_j \tilde A_j(\theta)$ is a linear combination of
the \emph{backpropagated/conjugated generators}
$\tilde A_j(\theta) = U_{>j}^\dagger(\theta) A_j U_{>j}(\theta)$, so that the curvature rank at $\theta$ is controlled by the dimension of
\[
  \mathcal{K}(\theta)
  := \mathrm{span}\{\tilde A_j(\theta): 1\le j\le p\},
\]
and its restriction to the normal bundle $N_\theta M$.
If $\dim \mathcal{K}(\theta) = r_{\mathrm{eff}}$, then
$\mathrm{rank}(\nabla^2 C(\theta)|_{N_\theta M}) \le r_{\mathrm{eff}}$.

Parameter tying enters only indirectly: tying groups parameters so that many
$\tilde A_j(\theta)$ coincide \emph{once propagated through the circuit}.  
Whether this collapse is strong enough to force $r_{\mathrm{eff}} = O(\log p)$
depends on the circuit architecture.

\begin{definition}[Operator-template collapse]
We say an ansatz exhibits operator-template collapse of order $r$ near $M$ if
there exists a neighborhood $\mathcal{U}$ of $M$ such that
\[
  \dim \mathcal{K}(\theta) \le r
  \qquad(\theta \in \mathcal{U}).
\]
\end{definition}

This is the actual structural condition needed for low curvature rank. It is a
restriction on the \emph{local effective degrees of freedom of the generator
algebra}.

\subsection{Tied parameter classes and template collapse}

Let the $p$ parameters be partitioned into $k$ tied classes, with $k = O(\log p)$.
Write $T: \mathbb{R}^p \to \mathbb{R}^k$ for the induced tying map.
Tying alone restricts directions in parameter space, but, does not, except under the aforementioned condition, by itself
limit $\mathcal{K}(\theta)$.

However, when tying is combined with \emph{locality} or \emph{multiscale
structure}, the backpropagated operators $\tilde A_j(\theta)$ do not proliferate
independently, but rather remain confined to a light cone of bounded width or a
constant-size template at each scale. The effect is that many distinct
parameters yield identical or linearly dependent $\tilde A_j(\theta)$ once
pushed through the circuit.

Consider a depth-$D$ architecture on a bounded-degree lattice, with at most $O(1)$
tied classes per layer. If the backward light cone of each gate has size $O(1)$
(independent of $p$), then for $\theta$ in a neighborhood of any Morse--Bott
minimum, $\dim \mathcal{K}(\theta) = O(D).$

Locality ensures that backpropagation maps each tied class to one of $O(D)$ operator templates, and tangential directions along $M$ contribute no curvature. Choosing depth $D = \Theta(\log p)$ gives $\dim \mathcal{K}(\theta) = O(\log p)$, so the normal Hessian rank is at most $O(\log p)$.

\subsection{Architectural regimes achieving $r=O(\log p)$}

\paragraph{Hardware-efficient layers with tied gate types.~\cite{Kandala_2017, Leone2024practicalusefulness}}
In $D$ alternating layers of local rotations and entanglers on a bounded-degree
graph, tying all gates of the same type in each layer gives $O(1)$ tied classes
per layer.
\[
  r_{\mathrm{eff}} = O(D).
\]
With $D = \Theta(\log p)$, $r_{\mathrm{eff}} = O(\log p)$.

\paragraph{Shallow multiscale ansatzes (TTN/MERA).~\cite{Vidal_2008, Shi_2006, Seitz2023simulatingquantum}}
In tree or MERA layouts, the number of scales is $\Theta(\log p)$, and tying
parameters within each scale gives $O(1)$ tied classes per scale. Each scale
induces only a constant number of backpropagated templates, suggesting
$r_{\mathrm{eff}} = O(\log p)$ in a neighborhood of the optimum.

\section{Adaptive Lipschitz Elimination on the Torus (\textsc{ALeT})}
\label{sec:alet}

\textsc{ALeT} adapts the Lipschitz-elimination of
LIPO~\cite{malherbe2017globaloptimizationlipschitzfunctions} to the geometric
setting of VQE landscapes on the torus. The algorithm differs in two key respects.
First, instead of working in the full parameter space $\T^p$, it restricts
search to random $(r{+}1)$-dimensional geodesic flats $E\subset\T^p$ that
intersect the Morse--Bott dent $M$ with positive probability. Second, it
replaces exact function evaluations with finite-shot estimates and embeds
confidence intervals directly into the elimination rule. These modifications
make elimination feasible in the low--normal-rank regime $r=O(\log p)$ and
yield PAC guarantees under shot noise.

Let $M\subset\T^p$ be the dent manifold with intrinsic dimension $m$ and normal
Hessian rank $r=p-m$. Along $M$, the cost is locally flat ($C|_M\equiv C^\star$
on connected components), while the normal directions exhibit strictly positive
curvature.

\subsection{Randomized slicing of the dent}
\label{subsec:rand-slicing}

We parametrize flats via a linear $(r{+}1)$-plane $V\in\Gr(r{+}1,\R^p)$ and a
translation $v\in\T^p$. Writing $w:\R^p\to\T^p$ for the coordinate-wise
mod-$2\pi$ map, define the geodesic flat
\[
  E(v,V)\;:=\;w(v+V)\ \subset\ \T^p.
\]
For each such flat we consider the one-dimensional slice length of the
intersection with the dent,
\[
  X(v)\ :=\ \HH^1\bigl(M\cap E(v,V)\bigr)\ \in [0,\infty).
\]

The following identity expresses the average slice length in terms
of the $m$-dimensional volume of $M$, $\HH^m (M)$, and the average transversality of $M$ to
$V^\perp$, $K(V; M)$.

\begin{proposition}[Translation average]
\label{prop:full-torus-expectation}
Fix a coordinate-aligned $(r{+}1)$-plane $V\subset\R^p$ and sample
$v\sim\Unif(\T^p)$. Then
\[
  \EE_v X(v)
  \;=\;
  (2\pi)^{-(m-1)}\,K(V;M)\,\HH^m(M),
\]
where
\[
  K(V;M)
  :=
  \frac{1}{\HH^m(M)}
  \int_M
    J_{m-1}\!\bigl(D\pi_{V^\perp}\!\mid_{T_xM}\bigr)\,d\HH^m(x)
\]
and $J_{m-1}$ is the $(m{-}1)$-Jacobian of $\pi_{V^\perp}$ restricted to
$T_xM$.
\end{proposition}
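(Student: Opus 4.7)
The plan is to reduce the average over $\T^p$ to an average over the orthogonal subtorus $\T_{V^\perp}$, and then recognize the resulting identity as the coarea formula applied to the projection $\pi_{V^\perp}|_M\colon M\to\T_{V^\perp}$.

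First I would observe that, since $V$ is coordinate-aligned, the flat $E(v,V)=w(v+V)$ depends on $v$ only through $b:=\pi_{V^\perp}(v)\in\T_{V^\perp}$, and hence so does $X(v)=\HH^1(M\cap E(v,V))$. Writing Haar measure on $\T^p$ as the product of Haar measures on $\T_V$ and $\T_{V^\perp}$ (with total masses $(2\pi)^{r+1}$ and $(2\pi)^{m-1}$, respectively, by $m-1=p-r-1$), this reduction gives
\[
  \EE_v X(v)
  \;=\;
  \frac{1}{(2\pi)^{m-1}}\int_{\T_{V^\perp}} X(b)\,db.
\]

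Next I would invoke the coarea formula for the smooth map $\pi_{V^\perp}|_M\colon M\to\T_{V^\perp}$. Since $M$ is a smooth $m$-dimensional submanifold of $\T^p$ and $\T_{V^\perp}$ has dimension $m-1$, generic fibers are $1$-dimensional and the coarea identity reads
\[
  \int_M J_{m-1}\!\bigl(D\pi_{V^\perp}|_{T_xM}\bigr)\,d\HH^m(x)
  \;=\;
  \int_{\T_{V^\perp}}\HH^1\!\bigl(M\cap\pi_{V^\perp}^{-1}(b)\bigr)\,db.
\]
The fiber $\pi_{V^\perp}^{-1}(b)$ is precisely the flat $E(b,V)$, so the inner integrand is exactly $X(b)$. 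Substituting into the previous display and rewriting via the definition of $K(V;M)$ yields the stated formula.

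The only point requiring care is the applicability of the coarea formula on a torus rather than Euclidean space. Smoothness of $M$ (from the Morse--Bott structure of Definition~\ref{def:dent}) together with the local isometric model $\T^p\cong\R^p/(2\pi\Z)^p$ reduces the claim to the standard Riemannian coarea theorem; the Jacobian $J_{m-1}(D\pi_{V^\perp}|_{T_xM})=\sqrt{\det(AA^\top)}$, with $A$ the matrix representing the restricted differential, vanishes exactly on the critical set where $T_xM$ fails to be transverse to $V$. Measurability of $b\mapsto X(b)$ and finiteness of both sides follow from the same identity together with $\HH^m(M)<\infty$ on the compact torus, so I do not anticipate a genuine obstacle—the proof is essentially a bookkeeping application of coarea once the translation invariance is factored out.
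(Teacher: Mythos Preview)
Your proof is correct and follows essentially the same route as the paper: factor out translation invariance in the $V$-direction via Fubini to reduce to an integral over $\T_{V^\perp}$, then apply the coarea formula to $\pi_{V^\perp}|_M$ and identify the fibers with the flats $E(b,V)$. The only difference is cosmetic---the paper integrates over the full torus and divides by $(2\pi)^p$ at the end, whereas you quotient out the $\T_V$-factor first; your added remarks on the Riemannian coarea formula and transversality are helpful but not a change of strategy.
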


\begin{proof}
Decompose $\R^p = V\oplus V^\perp$ so that
$\T^p\simeq\T^{\,r+1}\times\T^{\,m-1}$, and let
$\pi_{V^\perp}:\T^p\to\T^{\,m-1}$ be the projection. For $v=(a,b)$ with
$a\in\T^{\,r+1}$ and $b\in\T^{\,m-1}$, the flat $E(v,V)$ is precisely the fiber
$\pi_{V^\perp}^{-1}(b)$. Fubini and the coarea formula applied to
$\pi_{V^\perp}\!\mid_M$ give
\[
  \int_{\T^p} X(v)\,d\HH^p(v)
  =
  (2\pi)^{r+1}
  \int_M J_{m-1}\!\bigl(D\pi_{V^\perp}\!\mid_{T_xM}\bigr)\,d\HH^m(x),
\]
and dividing by $\HH^p(\T^p)=(2\pi)^p$ yields the stated expectation.
\end{proof}

Under Assumption~\ref{ass:fiber-regularity}, the slice lengths are not allowed
to concentrate exponentially on a negligible set of fibers.

\begin{lemma}[Second moment under fiber regularity]
\label{lem:second-moment-fiber}
Under Assumption~\ref{ass:fiber-regularity},
\[
  \EE_{v} X(v)^2
  \;\le\;
  \frac{A_{p,m}}{(2\pi)^{2(m-1)}}\,K(V;M)^2\,\HH^m(M)^2.
\]
\end{lemma}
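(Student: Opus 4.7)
The plan is to reduce the second moment over $v\in\T^p$ to a second moment over the base $b\in\T_{V^\perp}$, then to apply the pointwise fiber-regularity bound, and finally to substitute the translation-average identity from Proposition~\ref{prop:full-torus-expectation}. The argument is essentially a one-line Cauchy--Schwarz-style estimate once the setup is in place, and I do not anticipate any serious obstacle.

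First I would unpack the dependence of $X(v)$ on $v$. Writing $v=(a,b)\in\T^{r+1}\times\T^{m-1}$ as in the proof of Proposition~\ref{prop:full-torus-expectation}, the flat $E(v,V)$ equals the fiber $\pi_{V^\perp}^{-1}(b)$ and therefore depends only on $b$. Hence $X(v)=X(b)$ in the notation of Assumption~\ref{ass:fiber-regularity}, and pushing forward Haar measure under $v\mapsto b$ reduces the expectation over $v\sim\Unif(\T^p)$ to an expectation over $b\sim\Unif(\T_{V^\perp})$. In particular the mean $\mu=\EE_b X(b)=\EE_v X(v)$ computed in Proposition~\ref{prop:full-torus-expectation} coincides with the quantity appearing in Assumption~\ref{ass:fiber-regularity}.

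Next I would use fiber regularity to linearize the second moment. Since $X(b)\le A_{p,m}\,\mu$ for Haar-almost every $b$, multiplying both sides by $X(b)\ge 0$ and taking expectation gives
\[
  \EE_v X(v)^2 \;=\; \EE_b X(b)^2 \;\le\; A_{p,m}\,\mu\cdot\EE_b X(b) \;=\; A_{p,m}\,\mu^2.
\]
Finally, substituting $\mu=(2\pi)^{-(m-1)}K(V;M)\,\HH^m(M)$ from Proposition~\ref{prop:full-torus-expectation} yields exactly
\[
  \EE_v X(v)^2 \;\le\; \frac{A_{p,m}}{(2\pi)^{2(m-1)}}\,K(V;M)^2\,\HH^m(M)^2,
\]
as claimed. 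The only subtlety worth flagging is the reduction $X(v)=X(b)$, which is what ensures that the fiber-regularity hypothesis stated on $\T_{V^\perp}$ is the correct object to invoke; after that the bound is immediate.
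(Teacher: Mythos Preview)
Your proposal is correct and follows essentially the same argument as the paper: use the pointwise bound $X\le A_{p,m}\mu$ to get $X^2\le A_{p,m}\mu X$, take expectations, and substitute $\mu$ from Proposition~\ref{prop:full-torus-expectation}. Your explicit reduction $X(v)=X(b)$ is a helpful clarification the paper leaves implicit, but otherwise the proofs coincide.
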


\begin{proof}
Assumption~\ref{ass:fiber-regularity} states that
$X(v)\le A_{p,m}\mu$ almost everywhere, where
$\mu=\EE_v X(v)$. Thus $X(v)^2\le A_{p,m}\mu\,X(v)$ and
\[
  \EE_v X(v)^2
  \;\le\;
  A_{p,m}\mu^2.
\]
Substituting $\mu$ from Proposition~\ref{prop:full-torus-expectation} gives the
claim.
\end{proof}

Combining the first and second moments via Paley--Zygmund yields a uniform
lower bound on the probability that a random flat intersects $M$.

\begin{theorem}[Hit probability for a random flat]
\label{thm:hit-PZ}
Let $v\sim\Unif(\T^p)$ and $X(v)$ as above. Then
\[
  \Pr_v\!\big[M\cap E(v,V)\neq\varnothing\big]
  \;\ge\;
  \frac{\bigl(\EE_v X(v)\bigr)^2}{\EE_v X(v)^2}
  \;\ge\;
  A_{p,m}^{-1}.
\]
\end{theorem}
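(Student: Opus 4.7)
The plan is to apply the Paley--Zygmund inequality at threshold zero to the nonnegative random variable $X(v)$, and then substitute the first- and second-moment bounds established just above in Proposition~\ref{prop:full-torus-expectation} and Lemma~\ref{lem:second-moment-fiber}. In other words, all of the geometric content—translation invariance, the coarea formula, and the fiber-regularity hypothesis—has already been absorbed into those two moment estimates, so the theorem itself reduces to a one-line probabilistic inequality plus an algebraic cancellation.

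For the first inequality, I would apply Cauchy--Schwarz to the decomposition $X = X \cdot \mathbf{1}_{\{X>0\}}$, obtaining $(\EE_v X)^2 \le \EE_v X^2 \cdot \Pr_v[X>0]$, and rearrange to get $\Pr_v[X(v)>0] \ge (\EE_v X)^2/\EE_v X^2$. Because $X(v)$ vanishes by definition whenever $M\cap E(v,V)$ is empty \emph{or} $0$-dimensional, the event $\{X(v)>0\}$ is contained in $\{M\cap E(v,V)\neq\varnothing\}$, so monotonicity of probability upgrades the bound to the version stated in the theorem.

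For the second inequality, I would substitute directly. Proposition~\ref{prop:full-torus-expectation} gives $\EE_v X(v) = (2\pi)^{-(m-1)}\,K(V;M)\,\HH^m(M)$, and Lemma~\ref{lem:second-moment-fiber} gives $\EE_v X(v)^2 \le A_{p,m}\,(2\pi)^{-2(m-1)}\,K(V;M)^2\,\HH^m(M)^2$. Taking the ratio, the common factors $(2\pi)^{-2(m-1)}$, $K(V;M)^2$, and $\HH^m(M)^2$ cancel, leaving exactly $A_{p,m}^{-1}$.

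There is no substantive technical obstacle here, since the heavy lifting was done in the preceding moment lemmas; the only subtleties worth flagging are bookkeeping ones. The main one is the gap between $\{X(v)>0\}$ and $\{M\cap E(v,V)\neq\varnothing\}$, which is handled by the set inclusion noted above and costs nothing in the bound. A secondary caveat is that the ratio is only informative when $K(V;M)\,\HH^m(M)>0$—that is, when the dent has positive $m$-dimensional volume and projects nondegenerately onto $V^\perp$—which is the regime in which Assumption~\ref{ass:fiber-regularity} is itself meaningful, and should be flagged once at the start of the proof to avoid the $0/0$ ambiguity.
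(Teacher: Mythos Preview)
Your proposal is correct and matches the paper's approach exactly: the paper states the theorem as an immediate consequence of Paley--Zygmund applied to the moment bounds in Proposition~\ref{prop:full-torus-expectation} and Lemma~\ref{lem:second-moment-fiber}, without a separate proof block. Your observation that $\{X(v)>0\}\subseteq\{M\cap E(v,V)\neq\varnothing\}$ handles the one subtlety the paper leaves implicit, and your caveat about $K(V;M)\,\HH^m(M)>0$ is appropriate and consistent with the subsequent remark in the paper restricting to transversal $V$.
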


\begin{remark}
For a generic choice of $(r{+}1)$-plane $V$, standard transversality arguments
imply $K(V;M)>0$ for any fixed Morse--Bott $M$; we restrict to such $V$. Assumption~\ref{ass:fiber-regularity}
then ensures that the second moment of $X(v)$ is controlled polynomially in
$(p,m)$. Without such a regularity condition, one can still use positive reach
and translative intersection formulas~\cite{federer1959curvature,RatajZaehle2019,asymptoticgeometricanalysis1},
but the resulting bounds generally deteriorate exponentially in $m$ when a large
fraction of the mass of $M$ is concentrated in a small set of fibers.
\end{remark}

\begin{figure}[t]
    \centering
    \includegraphics[width=0.7\linewidth]{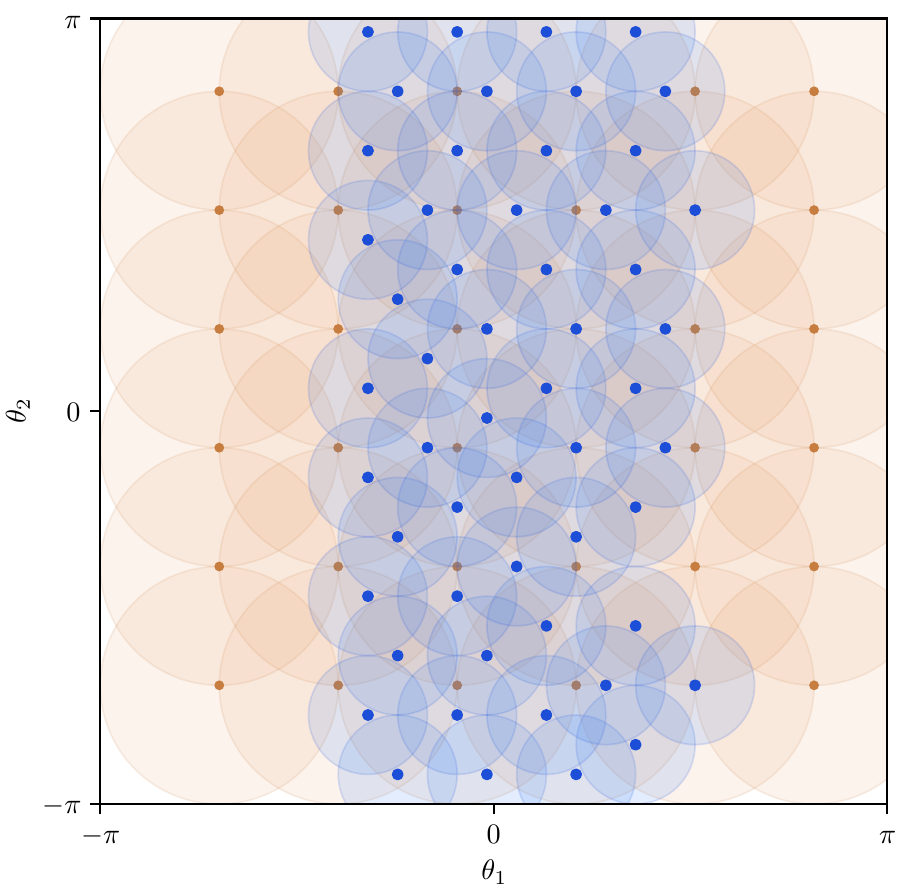}
    \caption{Two steps of \textsc{ALeT}: (orange) an initial $\mathfrak r$-net eliminates exterior balls via the EC; (blue) a refined $\mathfrak r/2$-net covers the surviving set.}
    \label{fig:alet}
\end{figure}

\subsection{Noisy elimination certificates on a flat}

Fix a flat $E=E(v,V)$ that intersects $M$, and endow $E$ with its geodesic
metric inherited from $\T^p$. Let $L>0$ be any valid global Lipschitz constant
for $C$ on $\T^p$. For a subset $S\subseteq E$ and radius $\mathfrak r>0$, write
$\mathcal C_{\mathfrak r}(S)$ for an $\mathfrak r$-net of $S$.

On round $t$, \textsc{ALeT} maintains a survivor set $S_t\subseteq E$ and an
$\mathfrak r_t$-net $\mathcal C_t:=\mathcal C_{\mathfrak r_t}(S_t)$, with
$\mathfrak r_{t+1}=\mathfrak r_t/\Delta$ for some fixed $\Delta>1$. Each query
at $\theta\in\mathcal C_t$ returns an unbiased estimator $\widehat C_t(\theta)$
from $n_{\mathrm{shots}}$ independent shots, bounded in $[-R/2,R/2]$ by
Assumption~\ref{ass:shots}. For any confidence level $\alpha\in(0,1)$ define
\[
  \rad(\alpha)
  := R\sqrt{\frac{\ln(2/\alpha)}{2\,n_{\mathrm{shots}}}}.
\]
With $N_t:=|\mathcal C_t|$, we set
\[
  \alpha_{t,\theta}
  := \frac{6\delta_{\mathrm{noise}}}{\pi^2\,t^2\,N_t},
  \qquad \theta\in\mathcal C_t,
\]
and form
\[
  \mathrm{LCB}_t(\theta)
  := \widehat C_t(\theta)-\rad(\alpha_{t,\theta}),\qquad
  \mathrm{UCB}_t^\star
  := \min_{\phi\in\mathcal C_t}
      \bigl(\widehat C_t(\phi)+\rad(\alpha_{t,\phi})\bigr).
\]
The \emph{noisy elimination certificate} at round $t$ is
\begin{equation}
  \mathrm{LCB}_t(\theta)
  \;>\;
  \mathrm{UCB}_t^\star + L\,\mathfrak r_t.
  \label{eq:EC}
\end{equation}
If~\eqref{eq:EC} holds, then for every $x\in B_E(\theta,\mathfrak r_t)$ we have
$C(x)>\mathrm{UCB}_t^\star \geq \min_{\phi\in\mathcal C_t}C(\phi)$, hence the entire ball
is certifiably suboptimal with respect to confidence $\alpha_{t, \theta}$ and can be removed. Our choice of $\alpha_t$ is not strictly necessary, but simplifies analysis. In practice, it may often be useful to maintain a constant $\rad(\alpha_t)$ and include this in the chosen $\varepsilon$; for such cases, note that the growth of required $n_\mathrm{shots}$ at timestep $t$, $n_t$, is not prohibitive under the adaptive scheduler.

\subsection{\textsc{ALeT} on a fixed flat}

\begin{algorithm}[H]
\caption{\textsc{ALeT} on a fixed flat $E=E(v,V)\subset\T^p$}
\label{alg:ALeT}
\begin{algorithmic}[1]
\State \textbf{Inputs:} final radius $\mathfrak r_{\mathrm{fin}}$, shrink factor $\Delta>1$, Lipschitz bound $L$, noise budget $\delta_{\mathrm{noise}}\in(0,1)$.
\State Initialize $S_0\gets E$, choose $\mathfrak r_0$, and set $\mathcal C_0\gets \mathcal C_{\mathfrak r_0}(S_0)$.
\For{$t=0,1,2,\dots$}
  \State Query $\widehat C_t(\theta)$ for all $\theta\in\mathcal C_t$.
  \State Compute $\mathrm{LCB}_t(\theta)$ and $\mathrm{UCB}_t^\star$ as above.
  \State \textbf{Elimination Step:}
    \[
      \mathcal C_t^{\mathrm{keep}}
      \gets
      \Bigl\{
        \theta\in\mathcal C_t:\;
        \mathrm{LCB}_t(\theta)\le
        \mathrm{UCB}_t^\star + L\mathfrak r_t
      \Bigr\}.
    \]
  \State Set
    $S_{t+1}\gets \bigcup_{\theta\in\mathcal C_t^{\mathrm{keep}}} B_E(\theta,\mathfrak r_t)$.
  \If{$\mathfrak r_t\le \mathfrak r_{\mathrm{fin}}$}
    \State \textbf{break}
  \EndIf
  \State $\mathfrak r_{t+1}\gets \mathfrak r_t/\Delta$,\quad
         $\mathcal C_{t+1}\gets \mathcal C_{\mathfrak r_{t+1}}(S_{t+1})$.
\EndFor
\State \textbf{Output:} $S_T$ as an $\varepsilon$-optimal region on $E$.
\end{algorithmic}
\end{algorithm}

\subsection{Correctness guarantees on a single flat}
\label{subsec:alet-guarantees}

We first note that survivor sets are nested.

\begin{theorem}[Monotonicity on a flat]
\label{thm:monotonicity}
For Algorithm~\ref{alg:ALeT} one has $S_{t+1}\subseteq S_t$ for all $t$.
\end{theorem}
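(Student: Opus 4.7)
The plan is a short induction on the round index $t$, exploiting the fact that the Elimination Step of Algorithm~\ref{alg:ALeT} is purely subtractive: it forms $\mathcal C_t^{\mathrm{keep}}$ as a subset of $\mathcal C_t$ and never introduces new centers. The base case $t=0$ is immediate, since
\[
  S_1 \;=\; \bigcup_{\theta\in\mathcal C_0^{\mathrm{keep}}} B_E(\theta,\mathfrak r_0)
  \;\subseteq\; E \;=\; S_0,
\]
because every ball lies inside the ambient flat $E$.

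For the inductive step, suppose $S_t\subseteq S_{t-1}$. Since $\mathcal C_t^{\mathrm{keep}}\subseteq\mathcal C_t$, comparing unions termwise yields
\[
  S_{t+1}
  \;=\; \bigcup_{\theta\in\mathcal C_t^{\mathrm{keep}}} B_E(\theta,\mathfrak r_t)
  \;\subseteq\; \bigcup_{\theta\in\mathcal C_t} B_E(\theta,\mathfrak r_t).
\]
To close the induction one identifies the right-hand side with $S_t$ itself: since $\mathcal C_t$ is chosen as an $\mathfrak r_t$-net of $S_t$ with centers in $S_t$, its balls cover $S_t$ by the net axiom, so it remains to verify that no retained ball protrudes past the boundary of $S_t$.

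I expect this last point to be the only nontrivial obstacle. Under a literal reading of $B_E(\theta,\mathfrak r_t)$ as a metric ball in $E$, containment can in principle fail at the boundary: if $\theta$ lies near the edge of a round-$(t{-}1)$ ball $B_E(\phi,\mathfrak r_{t-1})$ defining $S_t$, the triangle inequality gives only $d(x,\phi)\le\mathfrak r_{t-1}+\mathfrak r_t$, which exceeds $\mathfrak r_{t-1}$. The standard resolution, which I would adopt, is to interpret the union in the definition of $S_{t+1}$ modulo $S_t$---either by explicitly intersecting each $B_E(\theta,\mathfrak r_t)$ with $S_t$, or equivalently by setting
\[
  S_{t+1} \;:=\; S_t \setminus
  \bigcup_{\theta\in\mathcal C_t\setminus\mathcal C_t^{\mathrm{keep}}} B_E(\theta,\mathfrak r_t).
\]
Under either formulation the containment $S_{t+1}\subseteq S_t$ is tautological, and the boundary points trimmed this way all lie within $\mathfrak r_t$ of a retained center, so they would have been certified $\varepsilon$-optimal by the very same Lipschitz bound that kept that center---leaving the correctness and complexity analysis of later sections unaffected.
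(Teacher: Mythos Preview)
Your argument follows the same skeleton as the paper's one-line proof---reduce to $\mathcal C_t^{\mathrm{keep}}\subseteq\mathcal C_t$ and conclude---but you are more careful, and in fact more honest, about the boundary issue. The paper's proof simply asserts that ``every kept ball at round $t{+}1$ is contained in a ball that survived round $t$,'' which at best yields
\[
  S_{t+1}\ \subseteq\ \bigcup_{\theta\in\mathcal C_t} B_E(\theta,\mathfrak r_t),
\]
and the right-hand side is the $\mathfrak r_t$-\emph{cover} of $S_t$, not $S_t$ itself; the triangle-inequality protrusion you point out is exactly the gap the paper glosses over.

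Your proposed resolution---read $B_E(\theta,\mathfrak r_t)$ as $B_E(\theta,\mathfrak r_t)\cap S_t$, or equivalently define $S_{t+1}$ subtractively---is the standard way to make this rigorous, and your observation that the trimmed boundary points all lie within $\mathfrak r_t$ of a kept center (hence are already certified $\varepsilon$-optimal by the same Lipschitz bound) is exactly why nothing downstream changes. So your proposal is not just a valid proof under the amended reading; it is a strict improvement on the paper's own treatment of this lemma.
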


\begin{proof}
By construction, $S_{t+1}$ is the union of balls around $\mathcal C_t^{\mathrm{keep}}\subseteq\mathcal C_t$, so every kept ball at round $t{+}1$ is contained in a ball that survived round $t$.
\end{proof}

\noindent The main guarantee is an $\varepsilon$-optimality statement on $E$ with high
probability over the measurement noise. Let $\mathfrak r_T$ be the final
radius reached and define
\[
  \varepsilon_T
  := 3L\,\mathfrak r_T + 4\,\rad(\alpha_{T,\theta}).
\]

\begin{lemma}[Survival of the minimizer on a flat]
\label{lem:minimizer-survival}
Let $E\subset\T^p$ be a fixed flat and let $x^\star\in E$ be a minimizer of $C$ over $E$.
On the event $\mathcal G$ (all confidence intervals hold), the point $x^\star$
lies in every survivor set $S_t$ and, for each round $t$, there exists a kept
center $\theta_t\in\mathcal C_t^{\mathrm{keep}}$ such that
\[
  \dist_E(\theta_t,x^\star) \le \mathfrak r_t.
\]
\end{lemma}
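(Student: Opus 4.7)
The plan is to argue by induction on the round index $t$, using the event $\mathcal{G}$ only through the two-sided confidence bounds on each queried center together with the Lipschitz property of $C$ (inherited from $\T^p$ to the geodesic flat $E$).

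Base case. By construction $S_0 = E$, so $x^\star \in S_0$ is immediate. Since $\mathcal{C}_0$ is an $\mathfrak{r}_0$-net of $S_0$, there exists a center within $\mathfrak{r}_0$ of $x^\star$. The non-trivial content is to show that this nearest center survives the round-$0$ elimination, which is handled by the inductive step below.

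Inductive step. Suppose $x^\star \in S_t$. By the $\mathfrak{r}_t$-net property, choose $\theta_t \in \mathcal{C}_t$ with $\dist_E(\theta_t, x^\star) \le \mathfrak{r}_t$. The goal is to verify the keep condition $\mathrm{LCB}_t(\theta_t) \le \mathrm{UCB}_t^\star + L\mathfrak{r}_t$. On $\mathcal{G}$, the left-hand side is bounded above by the true cost,
\[
 \mathrm{LCB}_t(\theta_t) \;=\; \widehat{C}_t(\theta_t) - \rad(\alpha_{t,\theta_t}) \;\le\; C(\theta_t),
\]
and Lipschitz continuity of $C$ on $E$ (Proposition~\ref{prop:Lipschitz}) yields $C(\theta_t) \le C(x^\star) + L\mathfrak{r}_t$. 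For the right-hand side, observe that since $x^\star$ minimizes $C$ over $E$ and every $\phi \in \mathcal{C}_t \subseteq S_t \subseteq E$ is in $E$, we have $C(x^\star) \le C(\phi)$. Again on $\mathcal{G}$, $C(\phi) \le \widehat{C}_t(\phi) + \rad(\alpha_{t,\phi})$ for each queried center, so minimizing over $\phi$ gives $C(x^\star) \le \mathrm{UCB}_t^\star$. Chaining these three inequalities produces the certificate inequality, so $\theta_t \in \mathcal{C}_t^{\mathrm{keep}}$. Then $x^\star \in B_E(\theta_t,\mathfrak{r}_t) \subseteq S_{t+1}$, completing the induction.

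Main obstacles. The argument is essentially bookkeeping once the event $\mathcal{G}$ is in force; there is no probabilistic content beyond a union bound already encoded in the choice of $\alpha_{t,\theta}$ (to be combined later with Borel–Cantelli-type control via $\sum_t \sum_\theta \alpha_{t,\theta} \le \delta_{\mathrm{noise}}$ in the global PAC theorem). The one subtle point is making sure the Lipschitz constant $L$ used in the certificate coincides with a valid Lipschitz constant of $C$ restricted to the geodesic flat $E$: since $E$ is endowed with the induced geodesic metric and $C$ is $L$-Lipschitz on $(\T^p,\dist_{\T^p})$, this transfers directly without loss. A second bookkeeping item is that the nearest net center $\theta_t$ must itself be queried in round $t$; this is guaranteed because elimination uses $\mathcal{C}_t$ queried in its entirety, and $\theta_t \in \mathcal{C}_t$ by construction.
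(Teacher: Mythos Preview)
Your proof is correct and follows essentially the same route as the paper: induction on $t$, pick a net center $\theta_t$ within $\mathfrak r_t$ of $x^\star$, and use $\mathrm{LCB}_t(\theta_t)\le C(\theta_t)\le C(x^\star)+L\mathfrak r_t\le \mathrm{UCB}_t^\star+L\mathfrak r_t$ on $\mathcal G$ to conclude $\theta_t\in\mathcal C_t^{\mathrm{keep}}$ and hence $x^\star\in S_{t+1}$. The only cosmetic difference is that the paper phrases the key step as a contradiction with the elimination certificate, whereas you chain the inequalities directly; the content is identical.
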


\begin{proof}
We argue by induction on $t$. For $t=0$ we have $S_0=E$, so $x^\star\in S_0$.
Assume $x^\star\in S_t$. Because $\mathcal C_t$ is an $\mathfrak r_t$-net of
$S_t$, there exists $\zeta_t\in\mathcal C_t$ with
$\dist_E(\zeta_t,x^\star)\le\mathfrak r_t$.

We first show that $\zeta_t$ cannot be eliminated on $\mathcal G$. Suppose, for
contradiction, that $\zeta_t\notin \mathcal C_t^{\mathrm{keep}}$, i.e.\ the
elimination certificate~\eqref{eq:EC} holds at $\zeta_t$:
\[
  \mathrm{LCB}_t(\zeta_t)
  \;>\;
  \mathrm{UCB}_t^\star + L\,\mathfrak r_t.
\]
On $\mathcal G$, we have
$C(\zeta_t)\ge \mathrm{LCB}_t(\zeta_t)$ and
$C(\zeta_t)\le C(x^\star)+L\mathfrak r_t$ by Lipschitz continuity and
$\dist_E(\zeta_t,x^\star)\le\mathfrak r_t$. Thus
\[
  C(x^\star)+L\mathfrak r_t
  \;\ge\;
  C(\zeta_t)
  \;\ge\;
  \mathrm{LCB}_t(\zeta_t)
  \;>\;
  \mathrm{UCB}_t^\star + L\,\mathfrak r_t,
\]
which implies $C(x^\star)>\mathrm{UCB}_t^\star$. On the other hand, for any
$\phi\in\mathcal C_t$, the event $\mathcal G$ gives
$C(\phi)\le\widehat C_t(\phi)+\rad(\alpha_{t,\phi})$, so
\[
  \mathrm{UCB}_t^\star
  = \min_{\phi\in\mathcal C_t}
    \bigl(\widehat C_t(\phi)+\rad(\alpha_{t,\phi})\bigr)
  \;\ge\;
  \min_{\phi\in\mathcal C_t} C(\phi)
  \;\ge\;
  C(x^\star),
\]
since $x^\star$ minimizes $C$ over $E\supseteq\mathcal C_t$. This contradicts
$C(x^\star)>\mathrm{UCB}_t^\star$. Hence $\zeta_t\in\mathcal C_t^{\mathrm{keep}}$.

By definition of $S_{t+1}$,
\[
  S_{t+1}
  = \bigcup_{\theta\in\mathcal C_t^{\mathrm{keep}}} B_E(\theta,\mathfrak r_t),
\]
and $x^\star\in B_E(\zeta_t,\mathfrak r_t)$ by construction, so
$x^\star\in S_{t+1}$. Taking $\theta_t:=\zeta_t$ completes the induction and
the proof.
\end{proof}

\begin{theorem}[$\varepsilon$-optimality with confidence]
\label{thm:optimality}
Let $C^\star:=\min_{\theta\in E} C(\theta)$. With probability at least
$1-\delta_{\mathrm{noise}}$ (over all measurement noise),
every $x\in S_T$ returned by Algorithm~\ref{alg:ALeT} satisfies
\[
  C(x)\ \le\ C^\star + \varepsilon_T.
\]
\end{theorem}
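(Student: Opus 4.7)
The plan is to isolate a high-probability good event $\mathcal{G}$ on which all Hoeffding confidence intervals used by the algorithm hold, and then deduce the $\varepsilon_T$-bound via a short deterministic chain that combines (i) the failed elimination certificate at any surviving center with (ii) an upper bound on $\mathrm{UCB}_T^\star$ coming from the minimizer-adjacent center guaranteed by Lemma~\ref{lem:minimizer-survival}.

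First I would define $\mathcal{G}$ as the intersection, over all rounds $t\ge 1$ and all centers $\theta\in\mathcal{C}_t$, of the events $\bigl\{|\widehat{C}_t(\theta)-C(\theta)|\le\rad(\alpha_{t,\theta})\bigr\}$. By Lemma~\ref{lem:energy-ci} each has failure probability at most $\alpha_{t,\theta}$, and the scheduled choice $\alpha_{t,\theta}=6\delta_{\mathrm{noise}}/(\pi^2 t^2 N_t)$ is engineered precisely so that a double union bound yields $\Pr(\mathcal{G}^c)\le(6\delta_{\mathrm{noise}}/\pi^2)\sum_{t\ge 1}t^{-2}=\delta_{\mathrm{noise}}$. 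All subsequent steps take place on $\mathcal{G}$.

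Next I would bound $\mathrm{UCB}_T^\star$ in terms of $C^\star$ by exhibiting one concrete candidate: Lemma~\ref{lem:minimizer-survival} supplies a kept center $\zeta_T\in\mathcal{C}_T^{\mathrm{keep}}$ within geodesic distance $\mathfrak{r}_T$ of a minimizer $x^\star$ of $C$ on $E$. Lipschitz continuity gives $C(\zeta_T)\le C^\star+L\mathfrak{r}_T$, and on $\mathcal{G}$ we have $\widehat{C}_T(\zeta_T)+\rad(\alpha_{T,\zeta_T})\le C(\zeta_T)+2\rad(\alpha_{T,\zeta_T})$; since $\mathrm{UCB}_T^\star$ is a minimum over $\mathcal{C}_T$, it is bounded by this particular candidate. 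Now fix any $x\in S_T$; by construction there is a kept center $\theta\in\mathcal{C}_T^{\mathrm{keep}}$ with $\dist_E(x,\theta)\le\mathfrak{r}_T$. Because $\theta$ was not eliminated, the certificate~\eqref{eq:EC} failed at $\theta$, so $\mathrm{LCB}_T(\theta)\le\mathrm{UCB}_T^\star+L\mathfrak{r}_T$. On $\mathcal{G}$, $C(\theta)\le\mathrm{LCB}_T(\theta)+2\rad(\alpha_{T,\theta})$, and a final Lipschitz step gives $C(x)\le C(\theta)+L\mathfrak{r}_T$. Chaining these inequalities and absorbing the two confidence radii into the common term $4\rad(\alpha_{T,\theta})$—legitimate because all radii at a fixed round share a common schedule by Assumption~\ref{ass:shots}—yields $C(x)\le C^\star+3L\mathfrak{r}_T+4\rad(\alpha_{T,\theta})=C^\star+\varepsilon_T$.

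I do not expect any deep obstacle in the algebra of the third step; the substantive content is carried entirely by Lemma~\ref{lem:minimizer-survival}, without which the bound on $\mathrm{UCB}_T^\star$ would have no anchor. The only residual subtlety is the bookkeeping required to collapse $\rad(\alpha_{T,\zeta_T})$ and $\rad(\alpha_{T,\theta})$ into a single radius term in $\varepsilon_T$, and the verification that the harmonic-type series $\sum_t t^{-2}$ produced by the per-center confidence schedule correctly absorbs the factor $N_t$ via the $1/N_t$ factor in $\alpha_{t,\theta}$, so that the total noise budget telescopes to $\delta_{\mathrm{noise}}$.
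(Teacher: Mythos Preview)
Your proposal is correct and follows essentially the same route as the paper: define the global good event $\mathcal G$ via the union bound over the $\alpha_{t,\theta}$ schedule, use a center near $x^\star$ (supplied by Lemma~\ref{lem:minimizer-survival}, which the paper invokes implicitly via the statement ``$x^\star\in S_t$'') to upper-bound $\mathrm{UCB}_T^\star$, then chain the failed elimination certificate at any kept center with two Lipschitz steps to reach $C(x)\le C^\star+3L\mathfrak r_T+4\rad(\alpha_{T,\theta})$. The only cosmetic difference is that you cite Lemma~\ref{lem:minimizer-survival} to obtain a \emph{kept} center near $x^\star$, whereas the paper only needs some center in $\mathcal C_T$ near $x^\star$ for the $\mathrm{UCB}_T^\star$ bound; both are fine since the radii at round $T$ are constant across centers.
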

\begin{proof}
Let $\mathcal G$ be the event that every queried $\theta$ at every round has its
true value within the corresponding confidence interval, i.e.
\[
  |\,\widehat C_t(\theta)-C(\theta)\,|\le \rad(\alpha_{t,\theta})
  \quad\text{for all $t$ and all $\theta\in\mathcal C_t$}.
\]
By a union bound and the choice of $\alpha_{t,\theta}$,
\[
  \Pr(\mathcal G^c)
  \;\le\;
  \sum_{t\ge 1}\sum_{\theta\in\mathcal C_t}\alpha_{t,\theta}
  \;\le\;
  \sum_{t\ge 1}\frac{6\delta_{\mathrm{noise}}}{\pi^2 t^2}
  \;\le\;
  \delta_{\mathrm{noise}}.
\]
Hence $\Pr(\mathcal G)\ge 1-\delta_{\mathrm{noise}}$.

Fix a round $t$ and work on the event $\mathcal G$. Let
$R_t := \max_{\phi\in\mathcal C_t}\rad(\alpha_{t,\phi})$; note that in our
construction $\rad(\alpha_{t,\phi})$ does not depend on $\phi$, so
$\rad(\alpha_{t,\phi}) = R_t$ for all $\phi\in\mathcal C_t$.

\medskip
Take any kept center
$\theta\in\mathcal C_t^{\mathrm{keep}}$. Since $\theta$ is not eliminated, the
elimination certificate~\eqref{eq:EC} fails at $\theta$, so
\[
  \mathrm{LCB}_t(\theta)
  \;=\; \widehat C_t(\theta)-\rad(\alpha_{t,\theta})
  \;\le\;
  \mathrm{UCB}_t^\star + L\,\mathfrak r_t.
\]
Rearranging and using $\rad(\alpha_{t,\theta}) = R_t$ gives
\[
  \widehat C_t(\theta)
  \;\le\;
  \mathrm{UCB}_t^\star + L\,\mathfrak r_t + R_t.
\]
On $\mathcal G$ we have
$C(\theta)\le \widehat C_t(\theta)+\rad(\alpha_{t,\theta})
 = \widehat C_t(\theta)+R_t$, hence
\begin{equation}
  C(\theta)
  \;\le\;
  \mathrm{UCB}_t^\star + L\,\mathfrak r_t + 2R_t.
  \label{eq:kept-center-bound}
\end{equation}

\medskip\noindent
Let $x^\star\in E$ be a minimizer of $C$ over $E$, so $C(x^\star)=C^\star$.
Because $\mathcal C_t$ is an $\mathfrak r_t$-net of $S_t$ and
$x^\star\in S_t$, there exists $\zeta\in\mathcal C_t$ such that
$\dist_E(\zeta,x^\star)\le\mathfrak r_t$. Lipschitz continuity then implies
\[
  C(\zeta)
  \;\le\;
  C(x^\star) + L\,\mathfrak r_t
  \;=\;
  C^\star + L\,\mathfrak r_t.
\]
On $\mathcal G$,
\[
  \widehat C_t(\zeta) + \rad(\alpha_{t,\zeta})
  \;\le\;
  C(\zeta) + 2\rad(\alpha_{t,\zeta})
  \;\le\;
  C^\star + L\,\mathfrak r_t + 2R_t.
\]
By definition of $\mathrm{UCB}_t^\star$,
\[
  \mathrm{UCB}_t^\star
  \;=\;
  \min_{\phi\in\mathcal C_t}
  \bigl(\widehat C_t(\phi)+\rad(\alpha_{t,\phi})\bigr)
  \;\le\;
  \widehat C_t(\zeta)+\rad(\alpha_{t,\zeta}),
\]
so combining the last two displays yields
\begin{equation}
  \mathrm{UCB}_t^\star
  \;\le\;
  C^\star + L\,\mathfrak r_t + 2R_t.
  \label{eq:ucb-vs-opt}
\end{equation}

\medskip\noindent
Substituting~\eqref{eq:ucb-vs-opt} into
\eqref{eq:kept-center-bound} gives, for every
$\theta\in\mathcal C_t^{\mathrm{keep}}$,
\[
  C(\theta)
  \;\le\;
  \bigl(C^\star + L\,\mathfrak r_t + 2R_t\bigr)
  + L\,\mathfrak r_t + 2R_t
  \;=\;
  C^\star + 2L\,\mathfrak r_t + 4R_t.
\]

\noindent
At the final round $T$, every $x\in S_T$ lies within distance at most
$\mathfrak r_T$ of some kept center $\theta_x\in\mathcal C_T^{\mathrm{keep}}$ by
construction of the net. Using the Lipschitz property once more,
\[
  C(x)
  \;\le\;
  C(\theta_x) + L\,\mathfrak r_T
  \;\le\;
  C^\star + 3L\,\mathfrak r_T + 4R_T,
\]
where $R_T = \max_{\phi\in\mathcal C_T}\rad(\alpha_{T,\phi})
 = \rad(\alpha_{T,\theta^\star})$ for any $\theta^\star\in\mathcal C_T$.

By the definition of $\varepsilon_T$,
$\varepsilon_T = 3L\,\mathfrak r_T + 4\,\rad(\alpha_{T,\theta}) = 3L\,\mathfrak r_T + 4R_T$,
so we obtain
\[
  C(x)\ \le\ C^\star + \varepsilon_T
\]
for all $x\in S_T$ on the event $\mathcal G$. Since
$\Pr(\mathcal G)\ge 1-\delta_{\mathrm{noise}}$, the theorem follows.
\end{proof}

We now quantify the number of oracle calls required to reach a prescribed final
radius on a single flat.

\begin{theorem}[Sample complexity on a single flat]
\label{thm:flat-sample-complexity}
Let $E=E(v,V)\subset\T^p$ be an $(r{+}1)$-dimensional geodesic flat and assume
$C|_E$ is $L$-Lipschitz. Let $\varepsilon_{\mathrm{eff}}:=\varepsilon_T-4\rad_T>0$ be the effective accuracy at the final radius
\[
  \mathfrak r_T
  = \frac{\varepsilon_{\mathrm{eff}}}{3L},
  \qquad
  \rad_T:=\rad(n_{\mathrm{shots}},\alpha_{T,\theta}).
\]
Let $\kappa_{r+1}:=\HH^{\,r+1}(B_1(0))$ denote the volume of the $(r{+}1)$-dimensional
Euclidean unit ball. Then there exists a universal constant $C_0>0$ such that
\[
  N_{\mathrm{flat}}
  \;\le\;
  C_0\,\kappa_{r+1}^{-1}
  \left(\frac{6\pi L}{\varepsilon_{\mathrm{eff}}}\right)^{r+1},
\]
where $N_{\mathrm{flat}}$ is the total number of oracle queries made by
Algorithm~\ref{alg:ALeT} on $E$.
\end{theorem}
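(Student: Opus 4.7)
The plan is to prove this by a volumetric covering-number bound at each round combined with a geometric series summation over rounds, where the key observation is that the overall count is dominated by the final, finest round.

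First I would note that because $V$ is coordinate-aligned, the geodesic flat $E = E(v,V)$ is isometric to $\T^{r+1}$ with its flat metric, so $\HH^{r+1}(E) = (2\pi)^{r+1}$. At round $t$, the set $\mathcal C_t$ is an $\mathfrak r_t$-net of $S_t \subseteq E$, and by the standard maximal-packing / volume argument the cardinality satisfies
\[
  |\mathcal C_t|
  \;\le\;
  \frac{c\,\HH^{r+1}(E)}{\kappa_{r+1}\,\mathfrak r_t^{\,r+1}}
  \;\le\;
  \frac{c\,(2\pi)^{r+1}}{\kappa_{r+1}\,\mathfrak r_t^{\,r+1}},
\]
where $c$ is an absolute covering constant (the maximal packing of $\mathfrak r_t/2$-balls is disjoint, fits inside the $\mathfrak r_t/2$-fattening of $E$, which is contained in $E$ itself by periodicity, and yields this bound up to a universal factor). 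The bound uses only $S_t \subseteq E$, so it is valid regardless of what has been eliminated.

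Next I would sum the per-round query count $N_t = |\mathcal C_t|$ across $t = 0, 1, \ldots, T$. Since $\mathfrak r_{t+1} = \mathfrak r_t/\Delta$ with $\Delta > 1$ fixed, we have $N_t \le N_T\,\Delta^{-(T-t)(r+1)}$, hence
\[
  N_{\mathrm{flat}}
  \;=\;
  \sum_{t=0}^{T} N_t
  \;\le\;
  N_T\,\sum_{s\ge 0}\Delta^{-s(r+1)}
  \;\le\;
  \frac{N_T}{1-\Delta^{-(r+1)}}
  \;\le\;
  \frac{\Delta}{\Delta-1}\,N_T,
\]
so the total count is only a constant factor (depending on $\Delta$) larger than the final-round count.

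Finally I would substitute the final radius $\mathfrak r_T = \varepsilon_{\mathrm{eff}}/(3L)$ into the per-round bound to get
\[
  N_T
  \;\le\;
  \frac{c\,(2\pi)^{r+1}}{\kappa_{r+1}\,(\varepsilon_{\mathrm{eff}}/(3L))^{r+1}}
  \;=\;
  c\,\kappa_{r+1}^{-1}\,\Bigl(\frac{6\pi L}{\varepsilon_{\mathrm{eff}}}\Bigr)^{r+1},
\]
and absorb $c$ and the $\Delta/(\Delta-1)$ factor into the universal constant $C_0$. The argument is essentially routine; the only mild subtlety is that the covering constant contributes at most a bounded factor per dimension and can be folded into the base $6\pi$ of the exponential (or equivalently into $C_0$) without affecting the shape of the bound. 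There is no real obstacle beyond bookkeeping—note in particular that monotonicity ($S_{t+1}\subseteq S_t$) is not needed here, since the bound uses only $S_t\subseteq E$.
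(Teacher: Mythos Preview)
Your proposal is correct and follows essentially the same route as the paper: a volumetric covering bound $|\mathcal C_t|\le C_1\kappa_{r+1}^{-1}(2\pi/\mathfrak r_t)^{r+1}$, then a geometric-series summation over rounds dominated by the finest scale, and finally the substitution $\mathfrak r_T=\varepsilon_{\mathrm{eff}}/(3L)$. Your explicit $\Delta/(\Delta-1)$ factor and your remark that $S_t\subseteq E$ (rather than monotonicity) is all that is needed are both accurate refinements of what the paper leaves implicit.
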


\begin{proof}
Each connected component of $E$ is a flat $(r{+}1)$-torus of diameter at most
$2\pi$ in the induced metric. Let $d:=r{+}1$. For any $\rho\in(0,2\pi]$, a
standard volume argument yields
\begin{equation}
  N_{\mathrm{cov}}(E,\rho)
  \;\le\;
  C_1\,\frac{\HH^d(E)}{\HH^d(B_\rho)}
  \;=\;
  C_1\,\kappa_d^{-1}\Bigl(\frac{2\pi}{\rho}\Bigr)^d,
  \label{eq:covering-number}
\end{equation}
for some universal $C_1>0$. At the last elimination round, $\mathcal C_T$ is an
$\mathfrak r_T$-net of a subset of $E$, so
\[
  |\mathcal C_T|
  \;\le\;
  N_{\mathrm{cov}}(E,\mathfrak r_T)
  \;\le\;
  C_1\,\kappa_d^{-1}
  \Bigl(\frac{2\pi}{\mathfrak r_T}\Bigr)^d.
\]
Because $\mathfrak r_{t+1}=\mathfrak r_t/\Delta$, the covering numbers
$|\mathcal C_t|$ grow geometrically in $t$, and the total number of queries
satisfies
\[
  N_{\mathrm{flat}}
  = \sum_{t=0}^T |\mathcal C_t|
  \;\le\;
  C_2\,|\mathcal C_T|
  \;\le\;
  C_0\,\kappa_d^{-1}
  \Bigl(\frac{2\pi}{\mathfrak r_T}\Bigr)^d
\]
for a universal constant $C_2>0$ and $C_0:=C_1C_2$. Substituting
$\mathfrak r_T=\varepsilon_{\mathrm{eff}}/(3L)$ yields the stated bound.
\end{proof}

The dependence on dimension can be made explicit via the asymptotics of
$\kappa_d$.

\begin{corollary}[Quasi-polynomial scaling for $r=O(\log p)$]
\label{cor:quasipoly-flat}
Under the hypotheses of Theorem~\ref{thm:flat-sample-complexity}, one has
\[
  \log N_{\mathrm{flat}}
  \;=\;
  O\!\left(
    d\log d + d\log\frac{L}{\varepsilon_{\mathrm{eff}}}
  \right),
  \qquad d=r{+}1.
\]
In particular, if $r=O(\log p)$, then $d=O(\log p)$ and
\[
  N_{\mathrm{flat}}
  \;\le\;
  \exp\!\Bigl(
    O\bigl(\log p\,[\,\log\log p + \log(L/\varepsilon_{\mathrm{eff}})\,]\bigr)
  \Bigr),
\]
i.e.\ quasi-polynomial in both $p$ and $1/\varepsilon_{\mathrm{eff}}$.
\end{corollary}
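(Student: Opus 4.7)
The plan is to extract the asymptotic behavior directly from the explicit bound in Theorem~\ref{thm:flat-sample-complexity} by taking logarithms and then inserting the standard asymptotic for the Euclidean unit-ball volume $\kappa_d$. There is no new geometric content in this corollary; it is a two-line bookkeeping consequence of the preceding theorem plus Stirling.

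Concretely, I would start from
\[
  N_{\mathrm{flat}}
  \;\le\; C_0\,\kappa_{d}^{-1}\Bigl(\tfrac{6\pi L}{\varepsilon_{\mathrm{eff}}}\Bigr)^{d},
  \qquad d=r+1,
\]
take logarithms to get
\[
  \log N_{\mathrm{flat}}
  \;\le\;
  \log C_0 \;+\; \log \kappa_d^{-1} \;+\; d\log\!\Bigl(\tfrac{6\pi L}{\varepsilon_{\mathrm{eff}}}\Bigr),
\]
and then use $\kappa_d = \pi^{d/2}/\Gamma(d/2+1)$ together with Stirling's approximation to record
\[
  \log \kappa_d^{-1}
  \;=\; \log\Gamma(d/2+1) \;-\; \tfrac{d}{2}\log\pi
  \;=\; \tfrac{d}{2}\log d \;+\; O(d).
\]
Combining the two displays absorbs the constants $\log C_0$, $\log(6\pi)$, and the $O(d)$ Stirling remainder into the $d\log d$ and $d\log(L/\varepsilon_{\mathrm{eff}})$ terms, yielding the first claim
\[
  \log N_{\mathrm{flat}}
  \;=\; O\!\bigl(d\log d + d\log(L/\varepsilon_{\mathrm{eff}})\bigr).
\]

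For the second claim I would substitute $d = r+1 = O(\log p)$ directly: the first summand becomes $O(\log p\cdot\log\log p)$ and the second becomes $O(\log p\cdot\log(L/\varepsilon_{\mathrm{eff}}))$. Factoring out $\log p$ and exponentiating gives the stated quasi-polynomial bound on $N_{\mathrm{flat}}$.

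There is no real obstacle here; the only mild care needed is to ensure that the Stirling remainder is genuinely absorbed into the $d\log d$ term for all $d\ge 1$ (so that no case distinction for small $d$ is required) and that the constants in $6\pi$ and $C_0$ do not spoil the asymptotic form. Both are immediate since $d\log d\ge 0$ dominates any additive $O(d)$ and any additive constant for $d$ large, and for small $d$ the bound is trivially finite and can be absorbed into the implicit constant of the $O(\cdot)$.
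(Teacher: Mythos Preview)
Your proposal is correct and follows essentially the same route as the paper: take logarithms in the bound of Theorem~\ref{thm:flat-sample-complexity}, use $\kappa_d=\pi^{d/2}/\Gamma(d/2+1)$ together with Stirling to get $\log\kappa_d^{-1}=\tfrac{d}{2}\log d+O(d)$, and then substitute $d=O(\log p)$. The paper's proof is the same two-line bookkeeping, with slightly less commentary on absorbing constants and handling small $d$.
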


\begin{proof}
The unit-ball volume satisfies
\[
  \kappa_d
  = \frac{\pi^{d/2}}{\Gamma(d/2+1)}
  = \exp\!\bigl(-\tfrac d2\log d + O(d)\bigr),
\]
so $\log\kappa_d^{-1}=O(d\log d)$ as $d\to\infty$. Taking logs in
Theorem~\ref{thm:flat-sample-complexity} gives
\[
  \log N_{\mathrm{flat}}
  \;\le\;
  O\bigl(d\log d\bigr)
  + d\log\frac{L}{\varepsilon_{\mathrm{eff}}},
\]
which is the first claim. If $r=O(\log p)$ then $d=O(\log p)$ and
$\log d = O(\log\log p)$, yielding the stated quasi-polynomial bound.
\end{proof}
\subsection{Global PAC guarantee over multiple flats}
\label{subsec:global-pac}

Algorithm~\ref{alg:ALeT} provides an $\varepsilon$-optimal region on a \emph{fixed}
flat $E$ that intersects $M$, with failure probability controlled by the
measurement noise budget. To obtain a global guarantee on $\T^p$, we run
\textsc{ALeT} on multiple independent random flats and use
Theorem~\ref{thm:hit-PZ} to control the probability that no flat ever meets the
dent manifold. We then select a subset of survivor sets via a simple
data-dependent rule based on empirical costs.

We explicitly distinguish two sources of failure:
\begin{itemize}[leftmargin=*]
\item \emph{Intersection failure.} None of the sampled flats intersects $M$.
\item \emph{Noise failure.} At least one confidence interval used by \textsc{ALeT}
  on some flat is violated.
\end{itemize}
Let $\delta_{\mathrm{int}},\delta_{\mathrm{noise}}\in(0,1)$ be budgets for these two
events and write
\[
  \delta
  \;:=\;
  \delta_{\mathrm{int}}+\delta_{\mathrm{noise}}.
\]

\paragraph{Algorithm over flats.}
Sample $\ell$ independent translations $v^{(1)},\dots,v^{(\ell)}\sim\Unif(\T^p)$
and set $E^{(j)}:=E(v^{(j)},V)$ for a fixed coordinate-aligned
$(r{+}1)$-plane $V$. On each $E^{(j)}$ run Algorithm~\ref{alg:ALeT} with
\emph{per-flat} noise budget $\delta_{\mathrm{noise}}/\ell$ and final resolution
$\mathfrak r_T>0$, obtaining survivor sets $S_T^{(j)}\subseteq E^{(j)}$ and
terminal nets $\mathcal C_T^{(j)} = \mathcal C_{\mathfrak r_T}(S_T^{(j)})$.
Define the per-flat and global empirical minima
\[
  \widehat C_{\min}^{(j)}
  :=
  \min_{\theta\in\mathcal C_T^{(j)}} \widehat C_T(\theta),
  \qquad
  \widehat C_{\min}
  :=
  \min_{1\le j\le\ell}\,\widehat C_{\min}^{(j)}.
\]
Let $R_T := \rad(n_{\mathrm{shots}},\alpha_{T,\theta})$ be the terminal
confidence radius (common to all centers at round $T$), and define the
selection band
\[
  \beta_T
  \;:=\;
  L\,\mathfrak r_T + 2R_T.
\]
We keep exactly those flats that contain at least one ``near-minimal'' center:
\[
  J_{\mathrm{keep}}
  :=
  \Bigl\{
    j\in\{1,\dots,\ell\}:\ \exists\,\theta\in\mathcal C_T^{(j)}
      \text{ with }
      \widehat C_T(\theta)
      \le
      \widehat C_{\min} + \beta_T
  \Bigr\}.
\]
Finally, we output the union
\[
  S_{\mathrm{out}}
  \;:=\;
  \bigcup_{j\in J_{\mathrm{keep}}} S_T^{(j)}.
\]

\begin{remark}
The above selection rule is fully implementable: it uses only the terminal
empirical estimates $\widehat C_T(\theta)$ and the known radii $R_T$ and
$\mathfrak r_T$. It ensures that any flat whose true minimum is within
$O(L\mathfrak r_T + R_T)$ of the global minimum will be retained whenever all
confidence intervals hold.
\end{remark}

We now state the global PAC guarantee, phrased in terms of the terminal radius
$\mathfrak r_T$ and confidence radius $R_T$.

\begin{theorem}[Global PAC guarantee on $\T^p$]
\label{thm:global-PAC}
Let $M\subset\T^p$ be a Morse--Bott dent manifold with normal Hessian rank
$r=p-m$ and polynomial fiber regularity in the sense of
Assumption~\ref{ass:fiber-regularity}, with parameter $A_{p,m}=\poly(p,m)$.
Let $L>0$ be a Lipschitz constant for $C$ on $\T^p$.
Fix $\delta\in(0,1)$ and choose $\delta_{\mathrm{int}},\delta_{\mathrm{noise}}>0$
with $\delta_{\mathrm{int}}+\delta_{\mathrm{noise}}\le\delta$. Run the above
multi-flat procedure with
\[
  \ell
  \;\ge\;
  A_{p,m}\,\log\frac{1}{\delta_{\mathrm{int}}}
\]
and per-flat noise budget $\delta_{\mathrm{noise}}/\ell$ in each call to
Algorithm~\ref{alg:ALeT}. Suppose the terminal radius $\mathfrak r_T$ and
confidence radius $R_T$ satisfy
\[
  \varepsilon
  \;:=\;
  5L\,\mathfrak r_T + 8R_T
  \;>\;0.
\]
Then, with probability at least $1-\delta$ (over both the random choice of flats
and all measurement noise), the output $S_{\mathrm{out}}$ obeys:
\begin{enumerate}[leftmargin=*]
\item Every $\theta\in S_{\mathrm{out}}$ is $\varepsilon$-optimal,
  \[
    C(\theta)\ \le\ C^\star + \varepsilon,
    \qquad
    C^\star:=\min_{\phi\in\T^p} C(\phi).
  \]
\item At least one point of $S_{\mathrm{out}}$ lies within geodesic distance
  at most $\mathfrak r_T$ of the global minimum set $M$, i.e.\ there exists
  $\theta_{\mathrm{hit}}\in S_{\mathrm{out}}$ and $\theta^\star\in M$ with
  $\dist_{\T^p}(\theta_{\mathrm{hit}},\theta^\star)\le\mathfrak r_T$.
\end{enumerate}
Moreover, if $r=O(\log p)$ and $A_{p,m}=\poly(p,m)$, the total number of oracle
queries scales quasi-polynomially in $p$ and $1/\varepsilon$.
\end{theorem}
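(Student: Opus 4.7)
My plan is to decompose the failure event into a geometric (intersection) event $\mathcal{H}$ and a statistical (noise) event $\mathcal{G}$, bound each, and then run the correctness argument on $\mathcal{G}\cap\mathcal{H}$. Concretely, $\mathcal{H}$ will denote the event that some $E^{(j)}$ meets $M$, and $\mathcal{G}$ the event that every confidence interval used across all $\ell$ invocations of Algorithm~\ref{alg:ALeT} holds. Since the $v^{(j)}$ are i.i.d.\ uniform and Theorem~\ref{thm:hit-PZ} gives per-flat hit probability at least $A_{p,m}^{-1}$, independence yields $\Prb(\mathcal{H}^c)\le (1-A_{p,m}^{-1})^\ell\le e^{-\ell/A_{p,m}}\le \delta_{\mathrm{int}}$ for the stipulated $\ell$. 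Running each instance of Algorithm~\ref{alg:ALeT} with per-call budget $\delta_{\mathrm{noise}}/\ell$ and taking a second union bound over flats on top of the noise union bound inside the proof of Theorem~\ref{thm:optimality} gives $\Prb(\mathcal{G}^c)\le \delta_{\mathrm{noise}}$, hence $\Prb(\mathcal{G}\cap\mathcal{H})\ge 1-\delta$.

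Conditioning on $\mathcal{G}\cap\mathcal{H}$, I will fix some $j^\star$ with $E^{(j^\star)}\cap M\ne\varnothing$. Since $C\equiv C^\star$ on $M$, the flat minimum on $E^{(j^\star)}$ equals $C^\star$, realized at some $x^\star\in M$. Lemma~\ref{lem:minimizer-survival} will then produce a surviving center $\zeta\in\mathcal C_T^{(j^\star)}$ with $\dist(\zeta,x^\star)\le \mathfrak r_T$; Lipschitz continuity together with $\mathcal{G}$ gives $\widehat C_T(\zeta)\le C^\star+L\mathfrak r_T+R_T$. On the other hand, for every center on every flat, $\widehat C_T(\theta)\ge C(\theta)-R_T\ge C^\star-R_T$ on $\mathcal{G}$, so $\widehat C_{\min}\ge C^\star-R_T$. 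Combining these, $\widehat C_{\min}^{(j^\star)}-\widehat C_{\min}\le L\mathfrak r_T+2R_T=\beta_T$, so $j^\star\in J_{\mathrm{keep}}$; claim~(2) then follows since $\zeta\in S_T^{(j^\star)}\subseteq S_{\mathrm{out}}$ and $\dist(\zeta,x^\star)\le \mathfrak r_T$.

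For claim~(1), I will take any $j\in J_{\mathrm{keep}}$ and pick a witness $\theta_j\in\mathcal C_T^{(j)}$ with $\widehat C_T(\theta_j)\le \widehat C_{\min}+\beta_T$. Chaining $\widehat C_{\min}\le \widehat C_{\min}^{(j^\star)}\le C^\star+L\mathfrak r_T+R_T$ with $C(\theta_j)\le \widehat C_T(\theta_j)+R_T$ then gives $\min_{E^{(j)}} C\le C(\theta_j)\le C^\star+2L\mathfrak r_T+4R_T$. Applying Theorem~\ref{thm:optimality} on $E^{(j)}$---which certifies $C(x)\le \min_{E^{(j)}} C+3L\mathfrak r_T+4R_T$ for every $x\in S_T^{(j)}$---yields $C(x)\le C^\star+5L\mathfrak r_T+8R_T=C^\star+\varepsilon$. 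For the complexity bound I will multiply the per-flat query count of Corollary~\ref{cor:quasipoly-flat} by $\ell=\poly(p,m)\log(1/\delta_{\mathrm{int}})$; the shrunken per-flat noise budget $\delta_{\mathrm{noise}}/\ell$ inflates $n_{\mathrm{shots}}$ by only a factor $\log\ell=O(\log p+\log\log(1/\delta))$, so the total query count stays quasi-polynomial in $p$ and $1/\varepsilon$ whenever $r=O(\log p)$.

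The hardest step will be the conversion in the third paragraph: the selection rule is \emph{data-dependent}, using only empirical $\widehat C_T$, so I must turn the empirical gap $\beta_T$ between a retained flat and the global empirical minimum into a true-cost upper bound of at most $2L\mathfrak r_T+4R_T$ on that flat's true minimum, leaving exactly $3L\mathfrak r_T+4R_T$ of headroom for Theorem~\ref{thm:optimality}. The value $\beta_T=L\mathfrak r_T+2R_T$ is calibrated precisely so that the hitting flat $j^\star$ survives selection while every retained survivor set stays within the advertised $\varepsilon=5L\mathfrak r_T+8R_T$.
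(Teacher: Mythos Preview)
Your proposal is correct and follows essentially the same route as the paper: decompose the failure event into an intersection event and a noise event, bound each via Theorem~\ref{thm:hit-PZ} and the per-flat union bound, then on the good event show (i) the hitting flat passes the selection rule, (ii) every retained flat has flat-wise minimum within $2L\mathfrak r_T+4R_T$ of $C^\star$, and (iii) Theorem~\ref{thm:optimality} adds another $3L\mathfrak r_T+4R_T$, matching $\varepsilon$. The only cosmetic difference is in item~(2): you take $\theta_{\mathrm{hit}}=\zeta$ (the kept center from Lemma~\ref{lem:minimizer-survival}), whereas the paper takes $\theta_{\mathrm{hit}}=x^\star$ itself, which Lemma~\ref{lem:minimizer-survival} places in every $S_t^{(j^\star)}$ and hence yields distance~$0$; either choice suffices.
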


\begin{proof}
\textbf{Noise event.}
For flat $E^{(j)}$, let $\mathcal G_j$ denote the event that all confidence
intervals used by Algorithm~\ref{alg:ALeT} on that flat hold. By the proof of
Theorem~\ref{thm:optimality}, run with noise budget $\delta_{\mathrm{noise}}/\ell$,
we have $\Pr(\mathcal G_j^c)\le\delta_{\mathrm{noise}}/\ell$. A union bound gives
\[
  \Pr\Bigl(\bigcup_{j=1}^\ell \mathcal G_j^c\Bigr)
  \;\le\;
  \sum_{j=1}^\ell \Pr(\mathcal G_j^c)
  \;\le\;
  \delta_{\mathrm{noise}},
\]
so the global noise-good event
$\mathcal G := \bigcap_{j=1}^\ell \mathcal G_j$ satisfies
$\Pr(\mathcal G)\ge 1-\delta_{\mathrm{noise}}$.

\medskip\noindent
\textbf{Intersection event.}
By Theorem~\ref{thm:hit-PZ}, a single random flat intersects $M$ with
probability at least $1/A_{p,m}$. For $\ell$ independent flats, the probability
that none intersects $M$ is at most
$(1-1/A_{p,m})^\ell\le\exp(-\ell/A_{p,m})$. Choosing
$\ell\ge A_{p,m}\log(1/\delta_{\mathrm{int}})$ ensures that the intersection
failure event $\mathcal W^c$ (``no flat hits $M$'') has probability at most
$\delta_{\mathrm{int}}$, so $\Pr(\mathcal W)\ge 1-\delta_{\mathrm{int}}$.

\medskip\noindent
\textbf{Step 1: the hitting flat is kept.}
Work on $\mathcal G\cap\mathcal W$. Let $x^\star\in M$ be a global minimizer of
$C$ on $\T^p$, so $C(x^\star)=C^\star$. On $\mathcal W$ there exists at least
one flat $E^{(j_{\mathrm{hit}})}$ with $E^{(j_{\mathrm{hit}})}\cap M\neq\varnothing$,
and in particular $x^\star\in E^{(j_{\mathrm{hit}})}$.

For each flat $E^{(j)}$, let $x^{(j)}_\star$ be a minimizer of $C$ over
$E^{(j)}$, and set $C_\star^{(j)} := C\bigl(x^{(j)}_\star\bigr)$. Then
$C_\star^{(j_{\mathrm{hit}})} = C^\star$.
By Lemma~\ref{lem:minimizer-survival} applied on each flat under $\mathcal G_j$,
for every $j$ and every round $t$ there exists a kept center
$\zeta_t^{(j)}\in\mathcal C_t^{(j),\mathrm{keep}}$ with
$\dist_{E^{(j)}}(\zeta_t^{(j)},x^{(j)}_\star)\le\mathfrak r_t$. In particular,
at $t=T$ we obtain a kept center $\zeta^{(j)}\in\mathcal C_T^{(j),\mathrm{keep}}$
such that
\[
  \dist_{E^{(j)}}(\zeta^{(j)},x^{(j)}_\star)\le\mathfrak r_T.
\]
By Lipschitz continuity and \(\mathcal G_j\),
\[
  C(\zeta^{(j)})
  \;\le\;
  C_\star^{(j)} + L\,\mathfrak r_T,
  \qquad
  \widehat C_T(\zeta^{(j)})
  \;\le\;
  C(\zeta^{(j)}) + R_T
  \;\le\;
  C_\star^{(j)} + L\,\mathfrak r_T + R_T.
\]
Let $\widehat C_{\min}^{(j)}:=\min_{\theta\in\mathcal C_T^{(j)}}\widehat C_T(\theta)$
and $\widehat C_{\min}:=\min_j\widehat C_{\min}^{(j)}$. On $\mathcal G$ we also
have
\[
  \widehat C_T(\theta)\ge C(\theta) - R_T
  \quad\text{for all }\theta,
\]
so
\[
  \widehat C_{\min}^{(j)}\ge C_\star^{(j)} - R_T
  \quad\text{for all }j,
  \qquad
  \widehat C_{\min}\ge C^\star - R_T.
\]
On the other hand,
\[
  \widehat C_{\min}^{(j_{\mathrm{hit}})}
  \;\le\;
  \widehat C_T(\zeta^{(j_{\mathrm{hit}})})
  \;\le\;
  C^\star + L\,\mathfrak r_T + R_T,
\]
so
\[
  \widehat C_{\min}
  \;\le\;
  \widehat C_{\min}^{(j_{\mathrm{hit}})}
  \;\le\;
  C^\star + L\,\mathfrak r_T + R_T.
\]
Combining,
\[
  \bigl|\widehat C_{\min} - C^\star\bigr|
  \;\le\;
  L\,\mathfrak r_T + R_T.
\]

Now consider the hitting flat $j_{\mathrm{hit}}$. Its kept center
$\zeta^{(j_{\mathrm{hit}})}$ satisfies
\[
  \widehat C_T\bigl(\zeta^{(j_{\mathrm{hit}})}\bigr)
  \;\le\;
  C^\star + L\,\mathfrak r_T + R_T,
\]
while, as just shown, $\widehat C_{\min}\ge C^\star - R_T$. Hence
\[
  \widehat C_T\bigl(\zeta^{(j_{\mathrm{hit}})}\bigr)
  - \widehat C_{\min}
  \;\le\;
  (C^\star + L\,\mathfrak r_T + R_T) - (C^\star - R_T)
  \;=\;
  L\,\mathfrak r_T + 2R_T
  \;=\;
  \beta_T.
\]
By definition of $J_{\mathrm{keep}}$, this implies
$j_{\mathrm{hit}}\in J_{\mathrm{keep}}$. In particular,
$S_T^{(j_{\mathrm{hit}})}\subseteq S_{\mathrm{out}}$.

\medskip\noindent
\textbf{Step 2: flat-wise minima on kept flats are near-global.}
Let $j\in J_{\mathrm{keep}}$. By definition of $J_{\mathrm{keep}}$ there
exists a ``witness'' $\theta^{(j)}\in\mathcal C_T^{(j)}$ such that
\[
  \widehat C_T(\theta^{(j)})
  \;\le\;
  \widehat C_{\min} + \beta_T.
\]
On $\mathcal G$, for any $\theta$ we have
$\widehat C_T(\theta)\ge C(\theta)-R_T$, so
\[
  C(\theta^{(j)})
  \;\le\;
  \widehat C_T(\theta^{(j)}) + R_T
  \;\le\;
  \widehat C_{\min} + \beta_T + R_T.
\]
Using $\widehat C_{\min}\le C^\star + L\,\mathfrak r_T + R_T$ from above,
\[
  C(\theta^{(j)})
  \;\le\;
  C^\star + L\,\mathfrak r_T + R_T + \beta_T + R_T
  \;=\;
  C^\star + 2L\,\mathfrak r_T + 4R_T.
\]
Since $\theta^{(j)}\in E^{(j)}$, the flat-wise minimum obeys
\[
  C_\star^{(j)}
  = \min_{x\in E^{(j)}} C(x)
  \;\le\;
  C(\theta^{(j)})
  \;\le\;
  C^\star + 2L\,\mathfrak r_T + 4R_T.
\]

\medskip\noindent
\textbf{Step 3: global $\varepsilon$-optimality of $S_{\mathrm{out}}$.}
For each flat $E^{(j)}$, Theorem~\ref{thm:optimality} (applied with
per-flat budget $\delta_{\mathrm{noise}}/\ell$) guarantees that, on
$\mathcal G_j$,
\[
  C(x)
  \;\le\;
  C_\star^{(j)} + \varepsilon_T,
  \qquad
  x\in S_T^{(j)},
\]
where $\varepsilon_T = 3L\,\mathfrak r_T + 4R_T$.
Therefore, on $\mathcal G$ and for any $j\in J_{\mathrm{keep}}$ and any
$x\in S_T^{(j)}$,
\[
  C(x)
  \;\le\;
  C_\star^{(j)} + 3L\,\mathfrak r_T + 4R_T
  \;\le\;
  \bigl(C^\star + 2L\,\mathfrak r_T + 4R_T\bigr)
  + 3L\,\mathfrak r_T + 4R_T
  \;=\;
  C^\star + 5L\,\mathfrak r_T + 8R_T.
\]
By definition, every $\theta\in S_{\mathrm{out}}$ lies in some
$S_T^{(j)}$ with $j\in J_{\mathrm{keep}}$, so
\[
  C(\theta)
  \;\le\;
  C^\star + \varepsilon,
  \qquad
  \varepsilon:=5L\,\mathfrak r_T + 8R_T,
\]
which proves item~(1).

For item~(2), recall from Step~1 that $j_{\mathrm{hit}}\in J_{\mathrm{keep}}$.
By Lemma~\ref{lem:minimizer-survival}, the global minimizer $x^\star$ lies in
every survivor set $S_t^{(j_{\mathrm{hit}})}$ and, in particular, in
$S_T^{(j_{\mathrm{hit}})}\subseteq S_{\mathrm{out}}$. Hence we may take
$\theta_{\mathrm{hit}}:=x^\star$ and $\theta^\star:=x^\star$, yielding
$\dist_{\T^p}(\theta_{\mathrm{hit}},\theta^\star)=0\le\mathfrak r_T$.

\medskip\noindent
\textbf{Step 4: failure probability and sample complexity.}
We have
\[
  \Pr\bigl((\mathcal G\cap\mathcal H)^c\bigr)
  \;\le\;
  \Pr(\mathcal G^c)+\Pr(\mathcal H^c)
  \;\le\;
  \delta_{\mathrm{noise}}+\delta_{\mathrm{int}}
  \;\le\;
  \delta,
\]
so both properties hold with probability at least $1-\delta$. The stated
quasi-polynomial sample complexity follows by combining
Corollary~\ref{cor:quasipoly-flat} (for each flat) with the bound
$\ell\le \poly(p,m)\log(1/\delta_{\mathrm{int}})$ when
$A_{p,m}=\poly(p,m)$ and $r=O(\log p)$.
\end{proof}

\section{Discussion and Outlook}
\label{sec:discussion-outlook}

This work develops a geometric--probabilistic framework for global variational
optimization on the torus. Starting from the regularity implied by periodic
generators and a bounded Hamiltonian (\S\ref{sec:prelim}), we identified three key
structural properties that govern tractability: a Morse--Bott minimum set,
concentration of curvature in $r=O(\log p)$ normal directions, and a
fiber-regularity condition on the low-energy region. Within this regime,
\textsc{ALeT} provides noise-robust PAC guarantees whose dependence on $p$ is
quasipolynomial. The central point is that the analysis rests on geometric
structure---not on surjectivity conditions or fine-grained access to
derivatives.

\paragraph{Scope.}
The guarantees obtained here are not intended to model generic VQE behavior.
They apply to a structured subclass characterized by periodic parameterizations,
uniform Lipschitz regularity, and a low-curvature-dimensional geometry near the
optimum. In such cases, randomized slicing followed by Lipschitz elimination
removes large portions of the search region at controlled cost. The analysis
tracks only intersection probabilities and high-confidence elimination, making
the regime both mathematically transparent and practically testable.

\paragraph{Relation to hardness.}
These results are fully consistent with known hardness of ground-state
problems~\cite{gharibian2024hardnessapproximationgroundstate,kempe2005complexitylocalhamiltonianproblem}.  
Hardness reductions generate landscapes with rapidly oscillatory or
combinatorially encoded structure, outside the Morse--Bott and
fiber-regularity conditions used here. In contrast, the architectural mechanisms
leading to $r = O(\log p)$ in \S\ref{sec:hessian-rank}---layer tying and shallow
multiscale patterns---reduce expressivity in exchange for controlled curvature.
Our guarantees therefore delineate settings in which global optimization becomes
feasible, rather than asserting tractability in the worst case.

\paragraph{Assumptions in context.}
The global Lipschitz bound (Prop.~\ref{prop:Lipschitz}) is conservative; in
practice, the relevant local constants along survivor sets are often smaller.
The normal-rank condition need not hold globally—it is sufficient that the
Hessian maintains rank $r$ on neighborhoods of near-optimal level sets.  
Finally, Assumption~\ref{ass:fiber-regularity} enters only through the
second-moment bound required for the Paley--Zygmund argument; no additional
global geometric control is used.

\paragraph{Noise and robustness.}
The guarantees assume bounded, unbiased, IID shot noise, but the analysis
extends directly to variance-adaptive confidence radii (e.g., empirical
Bernstein) and to mild correlations.  
The elimination mechanism depends only on the availability of high-probability
deviation bounds and is otherwise agnostic to the particular estimator.  
This modularity is beneficial in NISQ regimes where noise characteristics may
vary across evaluations.

\paragraph{Practical considerations.}
\textsc{ALeT} requires only the ability to evaluate $\widehat C(\theta)$ at
prescribed centers and to maintain survivor sets across shrinking nets.  
Its effectiveness depends on the achievable shot precision: very small target
radii $r_T$ may demand large numbers of samples, even when the geometric
conditions are favorable.  
One possible application is to furnish reliable warm-start regions for local
optimizers, reducing the chance of initialization in suboptimal basins.

\paragraph{Limitations and future directions.}
The method is efficient only when the normal Hessian rank satisfies
$r=O(\log p)$, which corresponds to architectures with restricted expressivity.
The analysis assumes an a priori upper bound on $r$, though a simple
logarithmic search schedule with a budgeted hierarchy of flats could eliminate
this requirement.  
We do not optimize constants; our bounds are asymptotic and may be loose for
moderate $p$.  
Several natural extensions remain open, including adaptive estimation of local
Lipschitz constants and incorporating coarse curvature information without
sacrificing the PAC guarantees. Our results can be extended to the setting of approximate Morse--Bott minima, which is left to future work.

\paragraph{Broader outlook.}
The combination of Lipschitz elimination and randomized slicing applies
whenever low-energy regions concentrate on low-complexity sets inside a space of
finite volume.  
Beyond quantum applications, this suggests a portable template for
high-dimensional global optimization in regimes where curvature or intrinsic
dimension is effectively compressed.  
We expect that incorporating additional geometric information while preserving
the same probabilistic backbone will further expand the scope of problems for
which reliable global guarantees are attainable.

\section*{Acknowledgments}
I am grateful to Christian Arenz for mentorship and many discussions that helped to inspire this project. I also thank Sabina Dragoi for helpful discussions. Early conceptualization of this work was supported by NSF Grant No.\ 1953745 (REU). Any opinions, findings, conclusions, or recommendations expressed in this material are those of the author and do not necessarily reflect those of the NSF. Automated tools were used in drafting and editing; all conceptual and technical content is the author's own.

\appendix

\section{Lipschitz and Hessian Bounds}
\label{app:L-H-bounds}
We now prove the Lipschitz and Hessian bounds on our cost function described in Section~\ref{sec:prelim}. These are standard proofs, but we include them for completeness. Tighter bounds exist.

\begin{proof}[Proof of Proposition~\ref{prop:Lipschitz}]
Differentiating $U(\theta)$ with respect to $\theta_j$ gives us
\[
    \partial_{\theta_j} U(\theta) = \left( \prod_{k<j} e^{-i\theta_k A_k} \right)\left( -iA_j e^{-i \theta_j A_j} \right)\left( \prod_{k>j} e^{-i\theta_k A_k} \right) = -iU\tilde{A}_j \qquad 
\widetilde A_j(\theta)\vcentcolon =U_{>j}^\dagger A_j U_{>j},
\]
By differentiating through the product ansatz,
\[
\partial_{\theta_j} C(\theta)
= i\,\langle 0|[U^\dagger (\theta)HU(\theta),\widetilde A_j(\theta)]|0\rangle,
\]
so that $\|\widetilde A_j(\theta)\|=\|A_j\|$.  
Using the unitary invariance of the operator norm and the basic commutator inequality $\|[X,Y]\|\le 2\|X\|\,\|Y\|$ then gives
\[
|\partial_{\theta_j} C(\theta)|
\;\le\; \|[U^\dagger (\theta)HU(\theta),\widetilde A_j(\theta)]\|
\;\le\; 2\|H\|\,\|A_j\|
\;\le\; 2\Lambda\,\|A_j\|.
\]
Hence $\|\nabla C(\theta)\|_2^2 \le 4\Lambda^2 \sum_j \|A_j\|^2$.
\[
\implies |C(\theta)-C(\theta')|
\;\le\;
\sup_{\xi\in[\theta,\theta']} \|\nabla C(\xi)\|_2\,\dist_{\Tp}(\theta, \theta'),
\]
establishing global Lipschitz continuity with the claimed constant.
\end{proof}

\section{Linear Effective Ansatz}
\label{app:linearansatz}
In this section, we provide derivation for the linear effective ansatz mentioned earlier. Recall that we have the product ansatz and cost function,
\[
U(\theta) = \prod_{j=1}^p e^{-i\theta_j A_j}, \qquad C(\theta) = \langle 0 |U(\theta)^\dagger H U(\theta) | 0 \rangle.
\]
We've also previously defined the following operators
\[
U_{>j} \vcentcolon = \prod_{k = j+1}^p e^{-i\theta_k A_k}, \qquad \tilde{A}_j \vcentcolon = U_{>j}^\dagger A_j U_{>j}.
\]

\subsection{Directional Derivative of the Unitary}
Differentiating $U$ with respect to $\theta_j$ gives
\[
\partial_{\theta_j}U(\theta) = \left( \prod_{k < j}e^{-i\theta_k A_k}\right)\left( -i A_je^{-i\theta_j A_j} \right) \left( \prod_{k > j}e^{-i\theta_k A_k}\right) = -iU(\theta)\tilde{A}_j(\theta).
\]
Along a direction $v \in \R^p$, 
\[
\partial_vU(\theta) \vcentcolon = \sum_{j=1}^p v_j \partial_{\theta_j}U(\theta) = -iU(\theta)\sum_{j=1}^p v_j \tilde{A}_j(\theta).
\]
We then define
\[
K_\theta(v) \vcentcolon = \sum_{j=1}^p v_j \tilde{A}_j(\theta).
\]
\subsection{First and Second-Order Expansions of the Cost}
For small $t$, our first-order expansion of $U(\theta)$ is 
\[
U(\theta + tv) = U(\theta)\left(I - itK_\theta(v)\right) + O(t^2).
\]

\noindent Therefore,
\begin{align*}
C(\theta +tv) &= \langle 0 | U^\dagger(\theta+tv)HU(\theta + tv)|0\rangle\\
& = \langle 0 | U^\dagger H U|0\rangle + it\langle0|U^\dagger[H, K_\theta(v)]U|0\rangle + O(t^2),
\end{align*}

\noindent giving us,
\[
\partial_vC(\theta) = i\langle [H, K_\theta(v)] \rangle.
\]

\noindent Performing the same expansion one more time gives
\[
\partial_{v}^2C(\theta) = -\langle[K_\theta(v), [H, K_\theta(v)]] \rangle + \langle i[H, \dot{K}]\rangle.
\]

\bibliographystyle{quantum}
\bibliography{refs}

\end{document}